%% file: main.tex
\PassOptionsToPackage{x11names}{xcolor}
\documentclass[acmsmall,screen,nonacm]{acmart}

\PassOptionsToPackage{noend}{algorithmic}

\input{headers/math_commands.tex}

\input{headers/header-lqa}

\input{headers/header}
\input{headers/header-lqa-tail}

\newcommand{\todoterm}{\ensuremath{\mathtt{missing}}}

\setcopyright{acmlicensed}
\copyrightyear{2026}
\acmYear{2026}

\acmJournal{PACMPL}
\acmVolume{1}
\acmNumber{POPL}
\acmArticle{1}
\acmMonth{1}

\begin{document}

\title{Generative Compilation: On-the-Fly Compiler Feedback as AI Generates Code}

\author{Niels M\"undler-Sasahara}
\authornote{Both authors contributed equally to this research.}
\affiliation{%
  \institution{ETH Zurich}
  \city{Zurich}
  \country{Switzerland}
}
\email{niels.muendler@inf.ethz.ch}

\author{Hristo Venev}
\authornotemark[1]
\affiliation{%
  \institution{INSAIT and Sofia University ``St. Kliment Ohridski''}
  \city{Sofia}
  \country{Bulgaria}
}
\email{hristo.venev@insait.ai}

\author{Dawn Song}
\affiliation{%
  \institution{University of California, Berkeley}
  \city{Berkeley}
  \state{CA}
  \country{USA}
}
\email{dawnsong@berkeley.edu}

\author{Martin Vechev}
\affiliation{%
  \institution{ETH Zurich}
  \city{Zurich}
  \country{Switzerland}
}
\email{martin.vechev@inf.ethz.ch}

\author{Jingxuan He}
\affiliation{%
  \institution{University of California, Berkeley}
  \city{Berkeley}
  \state{CA}
  \country{USA}
}
\email{jingxuan.he@berkeley.edu}

\input{sections-new/abstract.tex}




\maketitle

\input{sections-new/intro.tex}

\input{sections-new/motivation.tex}
\input{sections-new/gencomp.tex}
\input{sections-new/lang.tex}
\input{sections-new/method.tex}
\input{sections-new/rust.tex}
\input{sections-new/eval.tex}
\input{sections-new/discussion.tex}
\input{sections-new/related.tex}
\input{sections-new/conclusion.tex}
\input{sections-new/acks.tex}


\bibliographystyle{ACM-Reference-Format}
\bibliography{references}

\input{sections-new/appendix.tex}


\end{document}

%% file: headers/math_commands.tex
\usepackage{amsmath,amsfonts,bm}

\renewcommand{\epsilon}{\varepsilon}
\renewcommand{\emptyset}{\varnothing}

\def\1{\bm{1}}

\def\eps{{\epsilon}}

\DeclareMathAlphabet{\mathsfit}{\encodingdefault}{\sfdefault}{m}{sl}
\SetMathAlphabet{\mathsfit}{bold}{\encodingdefault}{\sfdefault}{bx}{n}



%% file: headers/header-lqa.tex
\usepackage[utf8]{inputenc} 
\usepackage{lipsum} 

\usepackage[T1]{fontenc}

\usepackage{etoolbox}
\newbool{includeappendix}
\setbool{includeappendix}{true}

\input{headers/lqa/overfull}



\input{headers/lqa/subfigures}

\input{headers/lqa/colors}

\input{headers/lqa/listing}

%% file: headers/lqa/overfull.tex
\ifdefined\isoverfull
\else
\fi

%% file: headers/lqa/subfigures.tex
\usepackage{subcaption}


%% file: headers/lqa/colors.tex
\usepackage{xcolor}

\definecolor{my-full-blue}{HTML}{1F77B4}

\definecolor{my-full-orange}{HTML}{FF7F0E}
\definecolor{my-extra-orange}{HTML}{7C4514}

\definecolor{my-full-green}{HTML}{2CA02C}

\definecolor{my-full-red}{HTML}{d62728}

\definecolor{my-full-purple}{HTML}{9467bd}

\colorlet{my-blue}{my-full-blue!30}
\colorlet{my-orange}{my-full-orange!30}
\colorlet{my-green}{my-full-green!90}
\colorlet{my-red}{my-full-red!90}
\colorlet{my-purple}{my-full-purple!30}

\colorlet{mygreen}{PaleGreen3}
\colorlet{myred}{LightPink2}
\colorlet{myblue}{SteelBlue2}
\colorlet{mylightblue}{SteelBlue1}
\colorlet{myorange}{DarkOrange1}

%% file: headers/lqa/listing.tex
\usepackage{listings}
\usepackage{listings-rust}

\usepackage{textcomp}

\usepackage{xcolor}

\usepackage[scaled=0.8]{beramono}

\definecolor{ckeyword}{HTML}{7F0055}
\definecolor{ccomment}{HTML}{3F7F5F}
\definecolor{cstring}{HTML}{2A0099}

\lstdefinestyle{numbers}{
	numbers=left,
	framexleftmargin=20pt,
	numberstyle=\tiny,
	firstnumber=auto,
	numbersep=1em,
	xleftmargin=2em
}

\lstdefinestyle{layout}{
	frame=none,
	captionpos=b,
}

\lstdefinestyle{comment-style}{
	morecomment=[l]//,
	morecomment=[s]{/*}{*/},
	commentstyle={\color{ccomment}\itshape},
}

\lstdefinestyle{string-style}{
	morestring=[b]",%
	morestring=[b]',%
	stringstyle={\color{cstring}},
	showstringspaces=false,%
}

\lstdefinestyle{keyword-style}{
	keywordstyle={\ttfamily\bfseries},
	morekeywords={
		function,
		constructor,
		int,
		bool,
		return,
		returns,
		uint
	},
	morekeywords = [2]{},
	keywordstyle = [2]{\text},
	sensitive=true,
}

\lstdefinestyle{input-encoding}{
	inputencoding=utf8,
	extendedchars=true,
	literate=
	{ℝ}{$\reals$}1%
	{→}{$\rightarrow$}1%
	{α}{$\alpha$}1%
	{β}{$\beta$}1%
	{λ}{$\lambda$}1%
	{θ}{$\theta$}1%
	{ϕ}{$\phi$}1%
}

\lstdefinestyle{escaping}{
	moredelim={**[is][\color{blue}]{\%}{\%}},
	escapechar=|,
	mathescape=true
}

\lstdefinestyle{default-style}{
	basicstyle=\fontencoding{T1}\ttfamily\footnotesize,
	style=numbers,
	style=layout,
	style=comment-style,
	style=string-style,
	style=keyword-style,
	style=input-encoding,
	style=escaping,
	tabsize=2,
	upquote=true
}

%% file: headers/header.tex
\usepackage[english]{babel}
\hypersetup{
  colorlinks=true,
  linkcolor=myblue,
  citecolor=myblue,
  urlcolor=myblue
}
\usepackage{algorithm}
\usepackage{algorithmic}
\usepackage{color,soul}
\usepackage{ bbold }
\usepackage{multirow}
\usepackage{tabularx,booktabs,xltabular}
\usepackage{graphicx}
\usepackage{tabu}
\usepackage{array}
\usepackage{siunitx}
\usepackage{subcaption}
\usepackage{fontawesome}
\usepackage{stackengine}
\usepackage{svg}
\usepackage{mathtools}
\usepackage{xspace}
\usepackage{wrapfig}
\usepackage{makecell}
\usepackage{placeins}
\usepackage{float}
\usepackage{pifont}
\usepackage{svg}
\usepackage[most]{tcolorbox}
\usepackage{array}
\usepackage{dsfont}
\usepackage{enumitem}

\usepackage[nounderscore]{syntax}
\usepackage{semantic}

\newcolumntype{x}[2]{S[table-format=#1.#2,table-auto-round]}

\input{headers/lqa/tikz}
\usepackage{pgfplots}
\usepackage[capitalize]{cleveref}

\usepackage{pifont}
\newcommand{\lang}{\ensuremath{L}}

\newcommand{\Rustc}{the Rust compiler} 
\newcommand{\rustc}{\textsf{rustc}}

\newcommand{\FR}{FR\xspace}

\newcommand{\rimms}{\textsc{GC}}
\newcommand{\rimm}{\textsc{Generative Compilation}}
\newcommand{\gc}{\textsc{GC}}
\newcommand{\rimmfn}{\textsc{GC}\ensuremath{^{\mathsf{fn}}}}
\newcommand{\rfinal}{\textsc{Post Compilation}}
\newcommand{\rfinals}{\textsc{PC}}
\newcommand{\pc}{\textsc{PC}}
\newcommand{\vanillas}{\textsc{LLM}}
\newcommand{\vanilla}{\textsc{LLM}}

\newcommand{\dupgrade}{\textsc{UpdatedAPI}\xspace}
\newcommand{\dtranslation}{\textsc{Translation}\xspace}

\newcommand{\mgpt}{{GPT 5.3}\xspace}
\newcommand{\mgptcodex}{{GPT 5.3 Codex}\xspace}
\newcommand{\mopus}{{Claude Opus 4.8}\xspace}
\newcommand{\mopusshort}{{Opus 4.8}\xspace}
\newcommand{\mgemini}{{Gemini 3.5 Flash}\xspace}
\newcommand{\mgeminishort}{{Gemini 3.5}\xspace}
\newcommand{\mkimi}{{Kimi K2.7}\xspace}
\newcommand{\mglm}{{GLM 5.2}\xspace}
\newcommand{\mqwen}{{Qwen 3.5}\xspace}
\newcommand{\mqwenlarge}{{Qwen 397B}\xspace}
\newcommand{\mqwensmall}{{Qwen 9B}\xspace}

\definecolor{pastelgreen}{rgb}{0.47, 0.87, 0.47}
\definecolor{pastelred}{rgb}{1.0, 0.6, 0.6}

\renewcommand{\algorithmicrequire}{\textbf{Input:}}
\renewcommand{\algorithmicensure}{\textbf{Output:}}
\newcommand{\ttt}[1]{\text{\texttt{#1}}}

\usepackage{scalerel}

%% file: headers/lqa/tikz.tex
\usepackage{tikz}

\usetikzlibrary{arrows}
\usetikzlibrary{automata}
\usetikzlibrary{calc}
\usetikzlibrary{backgrounds}
\usetikzlibrary{decorations.markings}
\usetikzlibrary{decorations.pathmorphing}
\usetikzlibrary{decorations.pathreplacing}
\usetikzlibrary{fit}
\usetikzlibrary{patterns}
\usetikzlibrary{positioning}
\usetikzlibrary{shadows}
\usetikzlibrary{shapes}
\usetikzlibrary{shapes.geometric}
\usetikzlibrary{arrows.meta}

\definecolor{blue}{HTML}{347bc6}
\definecolor{green-underline}{HTML}{2de12c}
\definecolor{yellow-underline}{HTML}{ffd700}

\tikzstyle{block} = [
    minimum width=1cm,
    minimum height=0.5cm,
    align=center,
    fill=gray!30,
    rounded corners=5pt,
]

\tikzstyle{arrow} = [
	-{Latex[length=1.4mm, width=1.4mm]},
	draw=blue,
]

\tikzstyle{dashedline} = [
    draw=blue,
    dashed
]

\tikzstyle{line} = [
    draw=blue
]

\definecolor{lightblue}{RGB}{173,216,230} 

%% file: headers/header-lqa-tail.tex
\input{headers/lqa/references}

%% file: headers/lqa/references.tex
\usepackage[capitalize]{cleveref}

\crefformat{section}{\S#2#1#3}

\crefrangeformat{section}{\S#3#1#4\crefrangeconjunction\S#5#2#6}

\crefmultiformat{section}{\S#2#1#3}{\crefpairconjunction\S#2#1#3}{\crefmiddleconjunction\S#2#1#3}{\creflastconjunction\S#2#1#3}

\newcommand{\crefrangeconjunction}{--}

\crefname{listing}{Lst.}{listings}
\crefname{line}{Line}{Lines}
\crefname{appendix}{App.}{App.}

\newcommand{\app}[1]{%
	\ifbool{includeappendix}{\cref{#1}}{the appendix}%
}
\newcommand{\App}[1]{%
	\ifbool{includeappendix}{\cref{#1}}{The appendix}%
}

%% file: sections-new/abstract.tex
\begin{abstract}
Languages with rich static semantics, such as Rust, provide stronger guarantees for AI-generated code, but their strictness makes generation more difficult.
Off-the-shelf compilers can provide useful feedback post-generation, but does not guide intermediate generation steps, such as those during autoregressive LLM decoding.
Constrained decoding intervenes earlier by rejecting invalid tokens during sampling, but requires white-box model access and costly reimplementation for semantic constraints.

We introduce generative compilation, the first approach to obtaining compiler feedback on partial programs during generation.
The core technical device is a sealor: a lightweight, mostly syntax-guided transformation that converts partial programs into complete ones that standard compilers can diagnose.
It is designed such that possible-to-complete partial programs are never rejected, while preserving enough code context to catch genuine dead ends early.
We construct such a sealor on a core Rust-like calculus and prove that it satisfies these properties, all mechanized in Lean.
We extend it to the first partial-program checker for real Rust.

We evaluate our method on challenging repository-level Rust coding tasks, across both frontier black-box and open-weight models.
We show that generative compilation reduces non-compiling outputs and improves functional correctness, relative to standard post-generation feedback.
It does so by detecting a broad range of errors close to their source and early during generation, thereby reducing errors cascades and enabling focused diagnostics.
More broadly, generative compilation is a step toward making compilers a first-class citizen of AI-assisted programming active during generation, rather than a separate post-generation check.
\begingroup
\renewcommand{\thefootnote}{\textdagger}
\footnotetext{We publicly release our proof mechanization, code implementation, benchmarks, and evaluation results: \url{https://github.com/eth-sri/generative-compilation}.}
\endgroup
\end{abstract}

%% file: sections-new/intro.tex
\section{Introduction}
\label{sec:introduction}

AI has become a standard tool for code generation, driven by the growing capabilities of large language models (LLMs) and LLM-based coding agents \citep{jiang2024surveylargelanguagemodels}.
Yet AI-generated code remains error-prone, threatening the correctness and security of downstream software systems \citep{DBLP:conf/sp/PearceA0DK22,DBLP:conf/icml/VeroMCRB0HV25}.
This makes languages with rich static semantics particularly valuable for AI-generated code.
Rust, for example, uses its ownership and borrowing discipline to provide memory safety guarantees and prevent related reliability and security failures \citep{matsakis2014rust,rustbelt}.
At the same time, these guarantees are enforced through strict compile-time rules, making valid code harder to generate than in more permissive languages such as C \citep{rustswebench,DBLP:conf/icse/DeligiannisLMPR25,khatry2025crustbench}.
Compiler feedback can help bridge this gap by identifying violations and guiding the model toward code that satisfies these rules.

\paragraph{Post-Generation Compiler Feedback}
The most common way to leverage compiler feedback is to wait until entire files are fully generated, run the compiler as is on the resulting program, and feed any diagnostics back to the model.
This post-hoc feedback loop has important practical advantages: it works with black-box LLMs and benefits from the compiler's guarantees and diagnostics with no additional implementation cost~\citep{wang-etal-2022-compilable,bi-etal-2024-iterative,dou-etal-2024-stepcoder}.
However, it does not take advantage of the intermediate outputs produced during generation.
Feedback becomes available only after the complete program has been produced, meaning that every token emitted after the first unrecoverable error may be based on invalid assumptions and ultimately wasted.
Moreover, since the feedback concerns the entire output, the model receives a batch of accumulated diagnostics all at once, which can be large enough to hinder model understanding~\citep{du-etal-2025-context,liu-etal-2024-lost}.

\paragraph{Constrained Decoding}
Constrained decoding instead intervenes directly during generation.
It checks intermediate outputs of LLMs, accepting only tokens that keep the current output extendable to a valid program~\citep{synchromesh,DBLP:journals/tmlr/UgareSKM025,park2024gad,lipkin2025awrs,ts}.
Thus, if generation terminates, the final output is guaranteed to be valid.
However, for languages with expressive static semantics, achieving this guarantee requires substantial language-specific reimplementation, making it effectively impractical to support more than restricted fragments~\citep{ts,melcer2024constraineddecodingfillinthemiddlecode,chopchop}.
Moreover, token-level filtering is silent: the model is steered away from invalid continuations, but never receives compiler-style explanations that could help it revise the code.
Finally, constrained decoding requires white-box access to the model's next-token distribution and sampling process, ruling out its application to frontier models exposed only through black-box APIs \citep{DBLP:journals/tmlr/UgareSKM025,muendler2025constraineddiffusion}.

\paragraph{This Work: Generative Compilation}
We introduce \emph{generative compilation}, the first approach to make compiler-style feedback available \emph{during} code generation.
It serves as a middle ground between post-generation feedback and constrained decoding, addressing key limitations of both.
Like constrained decoding, a generative compiler processes partial programs while they are being produced.
Unlike constrained decoding, however, it provides compiler-style diagnostics upon rejection.
The model receives an explanation of what went wrong and can revise the generated code, rather than being silently steered away from invalid tokens.
Because the feedback is obtained on partial code, it concerns only isolated coding mistakes and avoids contextual clutter.
Further, the technique works with black-box models and reuses existing compiler infrastructure.

To enable generative compilation, the core new concept we introduce is a \emph{sealor}: a transformation that closes a partial program into a complete one by filling in missing syntactic structure and inserting well-typed placeholders.
The resulting sealed program can then be checked by an existing compiler.
The key challenge is designing the sealor to achieve both faithful feedback and error coverage.
Compiler rejection of the sealed program should guarantee that the partial program truly requires revision.
At the same time, sealing should expose as many dead ends as possible, so that unrecoverable errors are caught early.
We formalize these properties through corresponding completeness and soundness guarantees.

Our key insight is that a lightweight, mostly syntax-guided sealor can already meet the guarantees we target, without reimplementing the target language's type system.
We first demonstrate this on Featherweight Rust (\FR), a compact calculus inspired by Rust's ownership and borrowing discipline~\citep{pearce2021featherweight}.
We develop a sealor for \FR and prove that it satisfies the desired properties, with the full development mechanized in Lean.
We then carry the same methodology to real Rust: the sealor preserves generated syntax, inserts a small set of placeholders for missing code, and delegates type, borrow, and lifetime reasoning to \rustc{}, the Rust compiler.

We perform an extensive evaluation across seven frontier black-box and open-weight LLMs.
We focus on two repository-level code generation tasks for Rust, designed to stress compiler feedback: C-to-Rust translation and code generation against recently updated library APIs.
The results show that, compared with post-generation feedback, the on-the-fly feedback loop enabled by generative compilation further reduces compiler errors and improves functional correctness in most model-task configurations.
We then analyze how: generative compilation provides more focused diagnostics that are close to the error source and available well before file completion.

\paragraph{Key Contributions}
In summary, our main contributions are:
\begin{itemize}[leftmargin=*]
\item We introduce generative compilation, a novel framework enabling off-the-shelf compilers to check partial programs during code generation. We define the notion of sealors and prove that their completeness and soundness lift to generative compilation. (\cref{sec:framework})
\item We instantiate generative compilation on \FR, presenting a syntax-guided sealor that never rejects partial programs that admit valid completion and correctly flags an important class of partial programs with no valid extension. We fully mechanize \FR, our sealor, and its guarantees in Lean, making a number of corrections to the original \FR formalization in the process. (\cref{sec:lang,sec:method})
\item We apply our methodology to real Rust and build the first partial-program checker for Rust. This is based on a practical sealor that delegates semantic checking to \rustc{} while handling Rust-specific syntax, control flow, placeholders, and future-dependent errors. (\cref{sec:rust})
\item We evaluate our approach extensively on Rust coding tasks across frontier and open-weight models, showing that it reduces compiler errors, improves functional correctness, and detects various errors early during generation. (\cref{sec:eval})
\end{itemize}

%% file: sections-new/motivation.tex
\providecommand{\completes}{\rightsquigarrow}
\providecommand{\sealor}{\ensuremath{\mathcal{S}}}
\providecommand{\Compile}{\ensuremath{\mathcal{C}}}
\providecommand{\Pcompile}{\ensuremath{\mathcal{G}}}
\providecommand{\lang}{\ensuremath{\mathcal{L}}}
\providecommand{\ok}{\ensuremath{\mathsf{ok}}}
\providecommand{\err}{\ensuremath{\mathsf{err}}}
\providecommand{\Bool}{\ensuremath{\mathbb{B}}}
\providecommand{\vocab}{\ensuremath{\mathcal{V}}}
\providecommand{\alphabet}{\ensuremath{\Sigma}}
\providecommand{\EOS}{\ensuremath{\mathsf{EOS}}}
\providecommand{\valTrue}{\ensuremath{\mathrm{true}}}
\providecommand{\valFalse}{\ensuremath{\mathrm{false}}}
\providecommand{\concat}{\ensuremath{\circ}}
\providecommand{\emptystr}{\ensuremath{\texttt{\textquotedbl\textquotedbl}}}
\providecommand{\lm}{\ensuremath{M}}
\providecommand{\prefixpred}{\ensuremath{\mathcal{P}}}
\providecommand{\pclass}{\ensuremath{\mathcal{X}}}
\providecommand{\kw}[1]{\ensuremath{\mathtt{#1}}}
\providecommand{\missingterm}{\ensuremath{\kw{hole}_\kw{val}()}}

\section{Motivation for Generative Compilation}
\label{sec:motivation}

Today's AI-based code generation is largely powered by LLMs.
However, LLM-generated code provides no guarantee of respecting the target language's static rules.
Two existing strategies aim to address this gap: post-generation compiler feedback and constrained decoding.
In this section, we present both strategies and use a running example to show their limitations, thereby motivating generative compilation.

\paragraph{Code Generation with Language Models}
Let $\lm$ denote an LLM that generates code \emph{autoregressively}, one token at a time from left to right.
These tokens are drawn from a fixed vocabulary $\alphabet$, determined by $\lm$'s tokenizer.
Modern LLM tokenizers are universal: any string can be constructed from tokens in $\alphabet$~\citep{sennrich2016bpe}.
We therefore take $\alphabet^{\ast}$ to be the domain of all strings.

\begin{wrapfigure}{r}{0.48\linewidth}
\vspace{-5mm}
\begin{lstlisting}[basicstyle=\ttfamily\small, breaklines=true]
fn fmt_msg(msg: &mut String) -> String {
  let tag = &msg[..HEADER_SIZE];
  msg.drain(..HEADER_SIZE);
  let prio = tag.starts_with("HIGH");
  let body = msg;
  let n = body.lines().count();
  format!("{} {}: {}", prio, n, body)
}
\end{lstlisting}
\vspace{-4mm}
\caption{Non-compilable LLM-generated Rust code.}
\label{fig:bg-buggy}
\vspace{-5mm}
\end{wrapfigure}
Given a prompt $x$ (a natural-language task, optionally with reference code) and the code generated so far $c$, $\lm$ produces a next-token distribution $\lm(\cdot \mid x \concat c)$, where $\concat$ denotes string concatenation.
A token $t \sim \lm(\cdot \mid x \concat c)$ is sampled, appended to $c$, and generation repeats until a special end-of-sequence token $\EOS$ is produced.
As discussed above, this process can generate any string in $\alphabet^{\ast}$.
However, for a target language $\lang \subseteq \alphabet^{\ast}$, there is no \emph{a priori} guarantee that the final output lies in $\lang$.
For example, when prompted to write a Rust function that strips a fixed-size header from a message and formats the result for logging, an LLM might produce the code in \cref{fig:bg-buggy}.
Running the Rust compiler on this program reveals that it cannot be compiled.

\paragraph{Post-Generation Compiler Feedback}
Most programming languages provide a practical checker $\Compile$ that decides, for a given string $c$, whether it belongs to the language $\lang$, i.e., whether $c$ is syntactically well-formed and type checks.
Although compilers perform many tasks, such as optimizing code and emitting assembly, it is common to refer to this checking phase as ``compiling''; following this convention, we call the checker a ``compiler'' in this paper.

Formally, we model the compiler as a function $\Compile : \alphabet^{\ast} \to \Bool \times \alphabet^{\ast}$, which maps an input string $c$ to a pair $(\ok, \err)$ consisting of a Boolean verdict and a textual diagnostic.
If $c \in \lang$, then $\Compile(c) = (\valTrue, \emptystr)$, i.e., the textual diagnostic is empty;
otherwise, $\Compile(c) = (\valFalse, \err)$, where $\err \neq \emptystr$ is an error message explaining why $c \notin \lang$, e.g., by highlighting a failed type check, and potentially suggesting how to resolve the issue, e.g., by suggesting a type cast.
The compiler can be naturally used in a feedback loop for LLM-based code generation: after the model generates a full program $c$, we obtain $(\ok,\err) \coloneq \Compile(c)$.
If $\ok=\valTrue$, generation terminates and $c$ is guaranteed to be an element of $\lang$.
Otherwise, the model is asked to try again with the original prompt augmented by the generated program and diagnostic, e.g., $x \concat c \concat \err$.

\begin{figure*}[!t]
\centering
\begin{lstlisting}[language={}, basicstyle=\ttfamily\small, breaklines=true, numbers=none, escapechar=@, mathescape=false]
@\textcolor{red}{error[E0502]}@: @\textbf{cannot borrow `*msg` as mutable because it is also borrowed as immutable}@
2 |     let tag = &msg[..HEADER_SIZE];
  |                @\textcolor{violet}{\textendash\textendash\textendash\ immutable borrow occurs here}@
3 |     msg.drain(..HEADER_SIZE);
  |     @\textcolor{red}{\textasciicircum\textasciicircum\textasciicircum\textasciicircum\textasciicircum\textasciicircum\textasciicircum\textasciicircum\textasciicircum\textasciicircum\textasciicircum\textasciicircum\textasciicircum\textasciicircum\textasciicircum\textasciicircum\textasciicircum\textasciicircum\textasciicircum\textasciicircum\textasciicircum\textasciicircum\textasciicircum\textasciicircum\ mutable borrow occurs here}@
4 |     let prio = tag.starts_with("HIGH");
  |                @\textcolor{violet}{\textendash\textendash\textendash\ immutable borrow later used here}@
\end{lstlisting}
\vspace{-3mm}
\caption{Error message returned by the Rust compiler after it rejects the program in \cref{fig:bg-buggy}.}
\label{fig:bg-error}
\end{figure*}

For the example program in \cref{fig:bg-buggy}, running the compiler returns $(\valFalse, \err)$, with $\err$ shown in \Cref{fig:bg-error}.
The error involves Rust's ownership, borrowing, and lifetime discipline: a value may be accessed through either shared immutable references or an exclusive mutable reference, but not both during overlapping lifetimes.
At Line 2, the slice \texttt{tag} is an immutable borrow of \texttt{msg}.
The call to \texttt{msg.drain} at Line 3 then attempts to mutably borrow \texttt{msg} while that immutable borrow is still live, because \texttt{tag} is later used at Line 4.
This diagnostic is then appended to the prompt, and the model is asked to fix the error, either by regenerating the program from scratch or by editing the existing program.
\cref{fig:bg-fixed} shows a valid outcome that fixes the error by making \texttt{tag} an owned copy of the header rather than a borrow from \texttt{msg}.
The temporary immutable borrow used to create this copy ends before \texttt{msg.drain} takes a mutable borrow, so the two borrows do not overlap and the program satisfies Rust's rules.

\begin{wrapfigure}{r}{0.52\linewidth}
\vspace{-5mm}
\begin{lstlisting}[basicstyle=\ttfamily\small, breaklines=true]
fn fmt_msg(msg: &mut String) -> String {
  let tag = msg[..HEADER_SIZE].to_string();
  msg.drain(..HEADER_SIZE);
  let prio = tag.starts_with("HIGH");
  let body = msg;
  let n = body.lines().count();
  format!("{} {}: {}", prio, n, body)
}
\end{lstlisting}
\vspace{-4mm}
\caption{The regenerated Rust program above compiles after a post-generation compiler feedback round.}
\vspace{-5mm}
\label{fig:bg-fixed}
\end{wrapfigure}
This feedback loop pays for its simplicity through delayed feedback.
In \cref{fig:bg-buggy}, the borrow conflict already manifests at Line~4.
However, compiler feedback is obtained only after the model finishes generating the entire function at Line~8.
For longer programs in practice, early mistakes can influence subsequent generation and snowball before the model receives any signal from the compiler.
As a result, additional mistakes can be introduced and multiple errors may be reported together in a single diagnostic bundle, making each repair attempt harder and slowing the overall trial-and-error design process.

\paragraph{Constrained Decoding}
Constrained decoding offers an alternative to compiler feedback loops.
It exploits the autoregressive nature of LLM generation by intervening during decoding, rather than waiting until the entire program has been emitted. Specifically, constrained decoding relies on a \emph{prefix checker} $\prefixpred : \alphabet^{\ast} \to \Bool$, where $\prefixpred(c)=\valTrue$ if the current output $c$ can still be extended into a full program in $\lang$, or formally, if $\exists w.\ c \concat w \in \lang$. At each decoding step, candidate tokens are filtered through $\prefixpred$, so that only tokens $t$ satisfying $\prefixpred(c \concat t)=\valTrue$ may be selected. As a result, invalid continuations are ruled out as soon as they arise, and every generated partial program remains a prefix of some valid program in $\lang$.

Prefix checking solves a different problem from conventional compilation: the former must reason about partial programs, whereas the latter assumes full programs as input.
This creates a major implementation burden.
Existing compiler infrastructure cannot be used directly, so constrained decoding systems must often reimplement the relevant language semantics for prefix checking.
For general-purpose programming languages, matching the behavior of the full compiler in this way is impractical.
For example, the compiler-related part of the Rust codebase alone contains more than 600k lines of code~\citep{rustc}.
As a result, existing constrained decoding systems either enforce only syntax~\citep{park2024gad,dominos-DBLP:journals/corr/abs-2403-06988,DBLP:journals/tmlr/UgareSKM025} or cover a restricted subset of typing~\citep{ts,chopchop}.

\sethlcolor{red!35}
\begin{wrapfigure}{r}{0.49\linewidth}
\vspace{-5mm}
\begin{lstlisting}[basicstyle=\ttfamily\small, breaklines=true, escapechar=@]
fn fmt_msg(msg: &mut String) -> String {
  let tag = &msg[..HEADER_SIZE];
  msg.drain(..HEADER_SIZE);
  let prio = tag@\hl{.}@
\end{lstlisting}
\vspace{-4mm}
\caption{Constrained decoding rejects \hl{\texttt{.}} as it leads to a dead-end program prefix.}
\vspace{-5mm}
\label{fig:bg-partial}
\end{wrapfigure}
Assume for now that a (well-implemented) constrained decoding engine for Rust is available, 
and let us again examine the example program in \cref{fig:bg-buggy}.
Constrained decoding has no effect as long as each next token keeps the partial output extendable to a valid program.
Thus, the model generates the same prefix up to \texttt{let prio = tag}, shown in \cref{fig:bg-partial}.
This prefix is still valid: \texttt{tag} may refer to the prefix of an identifier to be defined later, such as a function name.
However, when selecting the next token, the \hl{\texttt{.}} used in the original program must be rejected by prefix checking.
Choosing \hl{\texttt{.}} commits \texttt{tag} to the existing variable, which triggers the same borrow conflict.
Therefore, to remain on a valid path, the model must choose the unnatural use-before-definition continuation.

This example exposes two further costs of constrained decoding.
First, constrained decoding silently filters out the token \hl{\texttt{.}} and can only shape future continuations.
The model is never told \emph{why} the token was rejected, nor can it revise the already generated prefix, as the repair in \cref{fig:bg-fixed} would require.
The only way forward is to force the model onto the low-probability use-before-definition path described above.
This can degrade global generation quality, especially when the rejection is due to non-local semantic constraints such as Rust's borrow checking~\citep{park2024gad, lipkin2025awrs, dominos-DBLP:journals/corr/abs-2403-06988}.
Second, filtering and resampling tokens require white-box access to the model's next-token distribution, which rules out the closed-weight frontier models that currently dominate coding workflows in practice.

\paragraph{Generative Compilation on the Running Example}
Prior work has made valuable progress on addressing the limitations of constrained decoding from the perspective of probabilistic sampling~\citep{dominos-DBLP:journals/corr/abs-2403-06988,DBLP:conf/iclr/UgareGS0M25,park2024gad,DBLP:journals/corr/abs-2506-05754,loula2025syntactic,lipkin2025awrs,DBLP:journals/corr/abs-2511-22277}.
In contrast, our generative compilation approach aims to explore a different route.
Like constrained decoding, it operates on partial programs rather than waiting for a full program.
Like conventional compilation, however, it produces textual feedback rather than altering the sampling process.
Crucially, it does so through a lightweight implementation that largely reuses existing compiler infrastructure and supports black-box models available only through APIs.
As a result, generative compilation addresses the limitations of the two strategies above.

We now use the running example to sketch how generative compilation handles partial programs. 
The key idea behind it is to seal partial programs, turning them into full programs that can be checked by a conventional compiler.
From the discussion above, the error in \cref{fig:bg-buggy} already manifests when generation reaches the prefix in \cref{fig:bg-partial}, including \hl{\texttt{.}}.
There, the program has committed to referring \texttt{tag} to the previously defined variable, which violates Rust's borrow-checking rules.
Through a series of rule applications, our sealor transforms this partial program into the full program shown in \cref{fig:bg-sealed}.
The sealed program preserves the access to \texttt{tag} at Line~4.
It then inserts a generic placeholder, $\missingterm$ at Line~5, which represents an arbitrary well-typed value.
In this example, it is a value of the function's \kw{String} return type.
Finally, the sealor closes the function at Line~6 such that the program is syntactically well-formed.

\begin{wrapfigure}{r}{0.49\linewidth}
\vspace{-5mm}
\begin{lstlisting}[basicstyle=\ttfamily\small, breaklines=true]
fn fmt_msg(msg: &mut String) -> String {
  let tag = &msg[..HEADER_SIZE];
  msg.drain(..HEADER_SIZE);
  &tag;
  |\missingterm{}|
}
\end{lstlisting}
\vspace{-4mm}
\caption{A full Rust program produced by our sealor.}
\vspace{-2mm}
\label{fig:bg-sealed}
\end{wrapfigure}
Compiling the sealed program exposes the same semantic issue as the full program in \cref{fig:bg-buggy}: \texttt{tag} still borrows from \texttt{msg} across the call to \texttt{msg.drain}.
Thus, \Rustc{} rejects \cref{fig:bg-sealed} with a similar borrow-checking diagnostic.
Compared with the partial program in \cref{fig:bg-partial}, the sealed program has only inserted $\missingterm$, which does not introduce type errors, and completed the missing syntax.
Thus, the borrow-checking error reported on the sealed program reflects a genuine error already committed by the partial program.
This is one of the guarantees we aim to obtain from the sealor: compiler errors on sealed programs should correspond to errors in the original partial programs, rather than artifacts of sealing.

We then map the diagnostic produced on the sealed program back to the partial program by comparing source-code spans.
The model receives the partial program together with the mapped diagnostic, but never sees the sealed program itself.
With this feedback, it ultimately produces the same repaired program as in \cref{fig:bg-fixed}, making \texttt{tag} an owned copy rather than a borrow from \texttt{msg}.
Generative compilation therefore provides the relevant feedback four and a half lines earlier than post-generation feedback, when the error first becomes unavoidable.

%% file: sections-new/gencomp.tex
\section{Generative Compilation}
\label{sec:framework}
Next, we formally define generative compilation in a generic, language-agnostic form.

\subsection{Generative Compilers and Sealors}
\label{sec:defs}

\paragraph{Generative Compilers: Definition}
A \emph{generative compiler} is a function $\Pcompile : \alphabet^{\ast} \to \Bool \times \alphabet^{\ast}$.
That is, it has the same interface as a conventional compiler and returns a Boolean verdict $\ok$ together with a textual diagnostic $\err$.
However, instead of checking whether a complete program belongs to $\lang$, it performs prefix checking, i.e., whether a partial program $c$ admits a valid completion.
Unlike constrained decoding, a generative compiler retains compiler diagnostics: $\err$ is empty whenever $\ok=\valTrue$ and otherwise provides a textual explanation, rather than a silent rejection signal.

\paragraph{Generative Compilers: Completeness and Soundness}
We say that a generative compiler $\Pcompile$ \emph{accepts} $c$ when it returns a positive verdict on $c$, i.e., when $\Pcompile(c) = (\valTrue, \emptystr)$.
It is \emph{complete} on a class of programs $\pclass \subseteq \alphabet^{\ast}$ if for all $c \in \pclass$, $\exists w.\ c \concat w \in \lang$ implies that $\Pcompile$ accepts $c$.
It is \emph{sound} on $\pclass$ if for all $c \in \pclass$, $\Pcompile$ accepts $c$ implies $\exists w.\ c \concat w \in \lang$.
Intuitively, completeness rules out rejecting partial programs that admit valid extensions, whereas soundness rules out accepting dead-end partial programs.
We say $\Pcompile$ is \emph{globally} complete or sound, when $\pclass = \alphabet^{\ast}$.
$\Pcompile$ is \emph{exact} on $\pclass$ iff it is sound and complete on $\pclass$.
Note that the first output of $\Pcompile(c)$ can be viewed as a prefix checker, inheriting the corresponding notions of soundness and completeness \citep{ts}.

\paragraph{Sealors: Definition}
Generative compilation seeks to reuse the existing compiler to obtain rich feedback and avoid costly reimplementation, while checking partial programs during generation.
This is achieved through a new notion of \emph{sealing}, which we define now.
A \emph{sealor} is a lightweight, language-specific transformation $\sealor: \alphabet^{\ast} \to \alphabet^{\ast}$ that attempts to close the unfinished structure of a partial program.
Its goal is to produce a closed form that a conventional compiler can process, thereby obtaining a verdict and diagnostic.

Given a compiler $\Compile$ and a sealor $\sealor$, the induced generative compiler is defined by $\Pcompile_{\Compile,\sealor}(c) \coloneq \Compile(\sealor(c))$.
This definition abstracts away a practical mismatch: diagnostics produced by $\Compile$ refer to the sealed program $\sealor(c)$, whereas the model should receive feedback on the original partial program $c$.
This can be addressed by post-processing diagnostics, for example, by mapping source spans in $\sealor(c)$ back to their corresponding spans in $c$.
We discuss this implementation detail for Rust in \cref{sec:rust-distinctions}; for simplicity, we assume for now that diagnostics are reported on the corresponding spans of the partial program $c$.

\paragraph{Sealors: Completeness and Soundness}
We define completeness and soundness for sealors analogously to generative compilers.

\begin{definition}[Sealor Completeness and Soundness]\label{def:sealor-sound-complete}
A sealor $\sealor$ is \emph{complete} on $\pclass \subseteq \alphabet^{\ast}$ if, for every $c \in \pclass$, $\exists w.\ c \concat w \in \lang$ implies $\sealor(c) \in \lang$. It is \emph{sound} on $\pclass$ if, for every $c \in \pclass$, $\sealor(c) \in \lang$ implies $\exists w.\ c \concat w \in \lang$. It is \emph{globally} complete (resp.\ sound) when $\pclass = \alphabet^{\ast}$.
\end{definition}

Intuitively, completeness requires that sealing turns every prefix with a valid continuation into a valid program, whereas soundness requires that sealing not make a dead-end prefix appear valid.
Assuming $\Compile$ is exact, the completeness and soundness of a sealor correspond directly to those of its induced generative compiler.

\begin{theorem}[Sealor Completeness and Soundness lift to the Generative Compiler]\label{thm:cons-prefix}
Let $\Compile$ be exact and $\pclass \subseteq \alphabet^{\ast}$. If $\sealor$ is complete on $\pclass$, then $\Pcompile_{\Compile,\sealor}$ is complete on $\pclass$. Symmetrically, if $\sealor$ is sound on $\pclass$, then $\Pcompile_{\Compile,\sealor}$ is sound on $\pclass$.
\end{theorem}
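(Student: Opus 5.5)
The argument is a direct unfolding of definitions; the only substantive ingredient is what ``$\Compile$ is exact'' means. I will read exactness of the conventional compiler as the interface guarantee stated earlier: for every string $s \in \alphabet^{\ast}$, $\Compile(s) = (\valTrue, \emptystr)$ if and only if $s \in \lang$. The forward direction is the stated behaviour on members of $\lang$. The backward direction follows because on non-members $\Compile$ returns $(\valFalse,\err)$ with $\err \neq \emptystr$, which is distinct from $(\valTrue,\emptystr)$. Combining this with the definition $\Pcompile_{\Compile,\sealor}(c) \coloneq \Compile(\sealor(c))$ yields the key equivalence: for every $c$, $\Pcompile_{\Compile,\sealor}$ accepts $c$ if and only if $\sealor(c) \in \lang$.

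\emph{Completeness.} Fix $c \in \pclass$ and suppose $\exists w.\ c \concat w \in \lang$. Completeness of $\sealor$ on $\pclass$ (\cref{def:sealor-sound-complete}) gives $\sealor(c) \in \lang$. The key equivalence then shows that $\Pcompile_{\Compile,\sealor}$ accepts $c$, which is exactly completeness of the generative compiler on $\pclass$.

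\emph{Soundness.} Fix $c \in \pclass$ and suppose $\Pcompile_{\Compile,\sealor}$ accepts $c$. By the key equivalence, $\sealor(c) \in \lang$. Soundness of $\sealor$ on $\pclass$ then yields $\exists w.\ c \concat w \in \lang$.

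There is no real obstacle. The only point requiring care is establishing the key equivalence, that is, confirming that acceptance, defined as the exact pair $(\valTrue,\emptystr)$, coincides with membership in $\lang$. This relies on $\Compile$ never returning $\valTrue$ on a non-member. Only that half of exactness is used for soundness, while only the forward half is used for completeness, so the two directions could in fact be stated under weaker one-sided hypotheses on $\Compile$.
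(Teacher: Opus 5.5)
Your proposal is correct and follows essentially the same route as the paper: both unfold $\mathcal{G}(c) = \mathcal{C}(\mathcal{S}(c))$ and chain the sealor's completeness (resp.\ soundness) with exactness of $\mathcal{C}$. Your observation that each direction uses only one half of exactness is a valid minor refinement that the paper does not state.
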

\begin{proof}
For completeness: take $c \in \pclass$ such that $\exists w.\ c \concat w \in \lang$. Completeness of $\sealor$ on $\pclass$ gives $\sealor(c) \in \lang$; exactness of $\Compile$ then gives that $\Compile$ accepts $\sealor(c)$, i.e., that $\Pcompile_{\Compile,\sealor}$ accepts $c$.
For soundness: take $c \in \pclass$ with $\Pcompile_{\Compile,\sealor}$ accepting $c$, i.e., $\Compile$ accepts $\sealor(c)$. Exactness of $\Compile$ gives $\sealor(c) \in \lang$; soundness of $\sealor$ on $\pclass$ then gives $\exists w.\ c \concat w \in \lang$.
\end{proof}

These characterizations give semantic proof obligations for a sealor, which can be established from the metatheory of the target language and then lifted to completeness or soundness of the induced generative compiler.
We dive deep into this for our \FR sealor in \cref{sec:method}.

\paragraph{Target Guarantees: Global Completeness and Selective Soundness}
Achieving both global completeness and global soundness amounts to exact prefix checking, which is undecidable in general for expressive semantic constraints~\citep{chopchop}.
Constrained decoding faces this difficulty at every decoding step.
Because emitted tokens cannot be revisited, it typically requires global soundness to inductively ensure that generation always remains on a path that admits a valid extension.
Token-level completeness is not required, but highly desirable: rejecting a token prematurely that could still lead to a valid program may significantly degrade model performance under constraints \citep{dominos-DBLP:journals/corr/abs-2403-06988,tam2024letspeakfreelystudy}.
Maintaining both properties token by token pushes constrained decoding toward exact prefix checking.
When constrained decoding should support rich static semantics, this causes substantial reimplementation effort of the corresponding semantic analysis such as typing \citep{ts}.

For generative compilation, the priorities for soundness and completeness change, with incompleteness becoming the more harmful failure mode.
Rejecting a partial program that admits a valid continuation causes the model to discard or revise a prefix from which a correct program could still be generated, providing spurious feedback that can confuse the model.
Meanwhile, soundness can be relaxed.
A rejection triggers revision: the model regenerates based on diagnostic feedback, allowing to revise already generated tokens.
Missing a soundness violation is thus tolerable: the generative compiler may accept a dead-end prefix and allow generation to continue, but the error can still be caught by a later check or by final compilation, when more code context is available.
Nevertheless, soundness remains valuable where it can be obtained cheaply, because it lets on-the-fly feedback arrive as early as possible.
Therefore, for generative compilation, we target global completeness together with soundness on a selected class of programs $\pclass \subsetneq \alphabet^{\ast}$.

\begin{figure}[t]
  \centering
  \begin{tikzpicture}[
    module/.style={draw, semithick, rounded corners=6pt, inner sep=10pt, align=left, minimum height=3.3cm, minimum width=3.25cm},
    title/.style={font=\small\bfseries, align=center},
    arr/.style={-{Latex[length=3mm]}, semithick},
    lbl/.style={font=\footnotesize, fill=white, inner sep=2pt},
  ]
    \node[module, fill=blue!5] (llm) at (0,0) {\begin{minipage}{3.9cm}
      \raggedright\small
      \textbf{whenever} a prompt $x$ arrives:\\
      \hspace{1em} $c \gets \emptystr$\\
      \hspace{1em} \textbf{repeat:}\\
      \hspace{2em} $t \sim \lm(\cdot \mid x \concat c)$\\
      \hspace{2em} $c \gets c \concat t$\\
      \hspace{2em} send $c$\\
      \hspace{1em} \textbf{until} $t = \EOS$
    \end{minipage}};
    \node[title, above=1.3mm of llm] {LLM-Based Code Generation};

    \node[module, fill=orange!5, right=2.5cm of llm] (gc) {\begin{minipage}{3.9cm}
      \raggedright\small
      \textbf{loop}:\\
      \hspace{1em} $c \gets$ latest received prefix\\
      \hspace{1em} $(\ok, \err) \gets \Pcompile_{\Compile,\sealor}(c)$\\
      \hspace{1em} \textbf{if} $\neg\ok$:\\
      \hspace{2em} $x \gets x \concat c \concat \err$\\
      \hspace{2em} send $x$ \\
      \hspace{1em} \textbf{if} $\ok$ \textbf{and} $c$ ends with $\EOS$:\\
      \hspace{2em} \textbf{return} $c$
    \end{minipage}};
    \node[title, above=0.6mm of gc] {Generative Compilation};

    \draw[arr] ([yshift=1cm]llm.east) --
      node[lbl, above]{prefix $c$}
      ([yshift=1cm]gc.west);
    \draw[arr] ([yshift=-1cm]gc.west) --
      node[lbl, below]{prompt $x$}
      ([yshift=-1cm]llm.east);
  \end{tikzpicture}
  \vspace{-2mm}
  \caption{We combine generative compilation and LLM-based code generation as two (concurrent) modules. The LLM streams each generated prefix $c$ to the generative compiler; upon rejection, the generative compiler returns an augmented prompt containing $c$ and its diagnostic $\err$, causing generation to restart. The two modules communicate only through plain text, supporting black-box LLM APIs.}
  \label{fig:gc-threads}
\end{figure}

\subsection{Combining Generative Compilation and LLM-Based Code Generation}
\label{sec:decoding}

\cref{fig:gc-threads} shows the approach we use in this paper to combine generative compilation with LLM-based code generation: the two run as (concurrent) modules that exchange only plain text.
Other integration strategies are possible; we discuss them in \cref{sec:dis}.

The module on the left represents LLM-based code generation.
It performs standard autoregressive decoding, as discussed in \cref{sec:motivation}.
Generation is interrupted and restarted whenever the module receives a new prompt $x$ from generative compilation.
As token generation proceeds, the module sends the latest partial program $c$ to generative compilation, stopping when $\EOS$ is emitted.
This module does not require access to $\lm$'s token distribution or control over its sampling procedure.
It only observes the current partial program $c$ as it is generated.
For frontier black-box models, this can be obtained through streaming API responses, where output is returned incrementally rather than only after completion.
Although \cref{fig:gc-threads} illustrates communication at token granularity, the same design also works when the API returns larger chunks of text.

The module on the right performs generative compilation.
It checks received prefixes using a \emph{latest-wins} strategy: incoming prefixes do not interrupt ongoing validation, and prefixes received during validation are coalesced so that only the newest one is retained.
When validation of a prefix $c$ finishes with $(\ok,\err) \coloneq \Pcompile_{\Compile,\sealor}(c)$, the module acts on the verdict.
If $\neg\ok$, it augments the prompt with $c$ and $\err$ and sends the revised prompt to the generator.
If $\ok$ holds and $c$ ends with $\EOS$, it returns $c$ as the generated program.
Otherwise, $c$ is a valid partial program, and the module begins validating the latest buffered prefix, if one is available.
In the common case where the $\EOS$-terminated prefix is already a complete program, the final check coincides with ordinary post-generation compiler feedback.
More generally, generative compilation and post-generation feedback are complementary and can always be combined.

When configured to run concurrently, generative compilation does not block token generation: the LLM can continue producing code while one prefix is being validated.
This avoids waiting for a compiler result after every token when validation is slower than token sampling.
The latest-wins strategy then retains only the newest prefix generated during validation for the next check.
As a result, a rejection may arrive after the LLM has produced a few additional tokens.
This extra work is modest and limited to the short suffix generated during a single validation, and once the earlier prefix is rejected, its continuation is discarded in the restarted attempt.
When incremental compilation is available, as in \rustc{}, compilation speed can increase significantly.
Generative compilation benefits also from this, since it is built largely on existing compiler infrastructure.

%% file: sections-new/lang.tex
\section{\FR: A Core Calculus for Rust}
\label{sec:lang}


\providecommand{\kw}[1]{\ensuremath{\mathtt{#1}}}
\newcommand{\sep}[1]{\ensuremath{\mathbin{|}}}

\newcommand{\term}{\ensuremath{t}}
\newcommand{\lval}{\ensuremath{w}}
\newcommand{\val}{\ensuremath{v}}
\newcommand{\pval}{\ensuremath{v^{\bot}}}
\newcommand{\typ}{\ensuremath{T}}
\newcommand{\ptyp}{\ensuremath{\tilde{T}}}
\newcommand{\litval}{\ensuremath{c}}
\newcommand{\lft}{\ensuremath{l}}

\newcommand{\ownref}{\ensuremath{\ell^{\bullet}}}
\newcommand{\borref}{\ensuremath{\ell^{\circ}}}

\newcommand{\bx}{\ensuremath{\square}}

\newcommand{\bshr}{\ensuremath{\kw{\&}}}
\newcommand{\bmut}{\ensuremath{\kw{\&}\,\kw{mut}}}

\newcommand{\undefT}[1]{\ensuremath{\lfloor #1 \rfloor}}

\newcommand{\varctx}{\ensuremath{\Gamma}}              
\newcommand{\storectx}{\ensuremath{\sigma}}            

\newcommand{\copyfn}{\mathrm{copy}}                    
\newcommand{\readProh}{\mathrm{readProhibited}}        
\newcommand{\writeProh}{\mathrm{writeProhibited}}      
\newcommand{\mutpred}{\mathrm{mut}}                    
\newcommand{\movefn}{\mathrm{move}}                    
\newcommand{\writefn}[1]{\mathrm{write}^{#1}}          
\newcommand{\dropfn}{\mathrm{drop}}                    

\newcommand{\shapecompat}{\sim}                     
\newcommand{\outlives}{\succeq}                        

\newcommand{\slot}[2]{\ensuremath{\langle #1 \rangle^{#2}}}

\providecommand{\dgr}[1]{\colorbox{gray!25}{$#1$}}

\providecommand{\LifetimeChild}{\mathrel{\succeq{}_{\text{imm}}}}

In this section, we review Featherweight Rust~\citep{pearce2021featherweight}, a compact calculus inspired by Rust's borrow system.
We present the syntax (\cref{sec:lang-syntax}) and typing (\cref{sec:lang-typing}) needed to demonstrate our generative compilation approach in \cref{sec:method}, omitting details that are not essential here.
For the full calculus, we refer readers to~\citet{pearce2021featherweight}.

\paragraph{Why a Core Calculus}
To present and analyze our approach formally, we need a language formalization.
Since our target is Rust, this means a formalization for Rust, and it must satisfy two requirements.
First, it should remain close to Rust's surface syntax, because our approach directly manipulates Rust source code.
Second, it should provide a compact setting in which our approach can be defined, analyzed, and understood, while still capturing Rust's most prominent features.

Existing Rust formalizations do not simultaneously meet these requirements.
RustBelt~\citep{rustbelt}, RustHorn~\citep{rusthorn}, and Aeneas~\citep{aeneas} provide powerful foundations for semantic reasoning and verification, but they operate at a level of abstraction that is not sufficiently close to the source syntax, and their developments are necessarily substantial.
Oxide~\citep{oxide} is closer to Rust's surface syntax, but still introduces significant machinery, making it more elaborate than needed for our purpose.

\paragraph{Why \FR}
Featherweight Rust (\FR)~\cite{pearce2021featherweight} follows the philosophy of Featherweight Java~\cite{igarashi2001featherweight}: provide a concise and minimal proof of type soundness while preserving the essence of the soundness argument for the full language.
Concretely, \FR remains close to Rust's surface syntax and captures many key aspects of Rust, including copy and move semantics, mutable and immutable borrows, and lexical lifetimes.
Moreover, \FR is compact and establishes soundness guarantees.

\input{figures/grammar-fr.tex}

\subsection{Syntax}
\label{sec:lang-syntax}

\FR's syntax is shown entirely in \cref{fig:grammar}.
We highlight some important details in order of appearance.

\paragraph{Move vs.\ Copy is Syntactic}
In Rust, a use of $\lval$ is a copy if its type implements the $\kw{Copy}$ trait, and a move otherwise.
\FR instead makes this distinction syntactically: $\lval$ denotes a move, whereas $\kw{copy}\ \lval$ denotes a copy.
This design keeps the operational semantics independent of the type system.
The type system (defined in \cref{sec:lang-typing}) then enforces that copies are only performed on copyable values: $\kw{copy}\ \lval$ is well typed only if $\lval$ has type $\typ$ satisfying $\copyfn(\typ)$, \FR's analog of the $\kw{Copy}$ trait.
In our mechanization, the predicate holds for $\epsilon$, $\kw{int}$, and shared borrows, but not for mutable borrows or boxes. Treating unit as copyable is a small adaptation relative to~\citet{pearce2021featherweight}.

\paragraph{Lifetimes are Lexical}
A block $\{^{\lft}\, \overline{\term}\}$ is a sequence of terms separated by semicolons and annotated by a lifetime.
Lifetimes $\lft, m$ annotate every lexical block $\{^{\lft}\, \overline{\term}\}$.
Lifetimes form a partial order $\lft \succeq m$, which means that $m$ is inside $\lft$.
This holds exactly when $m$'s block is lexically nested inside $\lft$'s, e.g., $\{^{\lft}\, \{^{m}\, \overline{\term}\}\}$.
We note that modern Rust uses non-lexical lifetimes (NLL), which let a borrow end at its last use rather than at the enclosing block's exit.
\FR's lexical model admits a substantially simpler metatheory while capturing the essence of lifetimes and remaining sound.

The block's lifetime is written after the opening brace, $\{^{\lft}\, \overline{\term}\}$, rather than after the closing brace as in~\citet{pearce2021featherweight} (i.e., $\{\overline{\term}\}^{\lft}$).
The two notations denote the same abstract syntax, but placing the lifetime up front makes it visible that the label is committed at block-entry rather than at block-exit.
This matters for several typing rules in \cref{sec:lang-typing},
which type body terms against the block's ambient lifetime: the label must be fixed before the first body term is checked. It also matters for our partial syntax in \cref{sec:method}, where a partial block whose tail is being emitted must already carry its lifetime.


\paragraph{Reference Values Encode Drop Responsibility}
Reference values come in two flavors: an owning reference $\ownref$ and a borrowed reference $\borref$.
Dropping an $\ownref$ recursively drops the location it refers to, whereas dropping a $\borref$ does not.
Encoding this distinction at the value level keeps the operational semantics independent of the type system.
By contrast, the distinction between shared and mutable references appears only in the type system, not at runtime.
Both shared and mutable borrows therefore evaluate to $\borref$ values, since they are non-owning references.

\paragraph{Borrow Types Name the Borrowed LVal}
A reference type is either a shared borrow $\bshr\,\lval$ or a mutable borrow $\bmut\,\lval$; we write $\bshr\,[\kw{mut}]\,\lval$ if either mode is allowed, where the target is a single lval (e.g., $\bshr\,x$).
Naming the borrowed lval directly in the type allows validating borrow safety by tracking overlaps and conflicts between borrowed and assigned lvals.
In contrast to \citet{pearce2021featherweight}, we adopt \FR's follow-up work \citep{followup}, restricting the $\lval$ to a single lval instead of a set of lvals to establish cycle-freeness of borrows.
This formulation is sufficient in the absence of conditionals.

\paragraph{Partial Types and Values Track Moved-Out Storage}
At runtime, a partial value $\pval$ extends $\val$ with the undefined marker $\bot$, which marks a slot whose contents have been moved out.
At the type level, a partial type $\ptyp$ extends $\typ$ with $\bx\,\ptyp$ (a box whose contents are themselves partial) and $\undefT{\typ}$ (a slot moved out but with its former shape remembered, e.g., $\undefT{\bshr\,x}$, so re-assignment can be shape-checked).
Every value is trivially a partial value, and every type a partial type.
The converses fail because only the partial forms admit moved-out markers.

\paragraph{Source-Level Fragment}
The reference values $\ownref$ and $\borref$ arise only at runtime, as the results of evaluating $\kw{box}$ and borrow expressions; a programmer never writes them directly.
We call a term \emph{source-level} when no reference value appears as a subterm.
The language does not require a distinguished surrounding block: a top-level program is any term typable from the empty environments at the root lifetime $\lft_{\mathrm{root}}$.
Blocks remain the construct that introduces nested lexical lifetimes.


\input{figures/typing-fr.tex}

\subsection{Typing}
\label{sec:lang-typing}

\cref{fig:typing} presents the main typing rules of \FR that we use in later sections.
Next, we start by introducing the typing contexts and judgments, followed by the auxiliary functions, and then explain the rules. We close by stating \FR's type and borrow safety and summarizing our Lean mechanization.

\paragraph{Contexts and Judgments}
Typing uses two contexts.
The \emph{typing environment} $\varctx$ maps each declared variable $x$ to a slot type $\slot{\ptyp}{m}$, recording its partial type $\ptyp$ and the lifetime $m$ in which $x$ was declared.
The \emph{store typing} $\storectx$ assigns types to runtime locations.
$\storectx$ is needed only when typing intermediate execution states; source-level programs are typed under $\storectx = \emptyset$.

The typing rules in \cref{fig:typing} use one main judgment: $\varctx_1 \vdash \langle \term : \typ \rangle^{\lft}_{\storectx} \dashv \varctx_2$.
It asserts that, in lifetime $\lft$, the term $\term$ has type $\typ$, and that evaluating $\term$ transforms the environment from $\varctx_1$ to $\varctx_2$. Threading the environment through the judgment in this input/output style makes the type system \emph{flow-sensitive}.
For example, after moving from an lval $\lval$, the corresponding slot in $\varctx_2$ is marked moved out, so a subsequent use of $\lval$ is rejected.
Declarations and assignments similarly update the environment in a way that affects subsequent typing decisions.
These effects cannot be captured by a traditional flow-insensitive type system, in which each variable has a single fixed type throughout its scope.

\cref{fig:typing} uses three auxiliary judgments only as premises of the main rules: LVal typing, type well-formedness, and shape compatibility.
\emph{LVal typing}, $\varctx \vdash \lval : \slot{\ptyp}{m}$, determines the partial type of $\lval$ and the lifetime $m$ in which it can safely be used.
\emph{Type well-formedness}, $\varctx \vdash \typ \outlives \lft$, holds when every borrow in $\typ$ targets storage that outlives $\lft$.
\emph{Shape compatibility}, $\varctx \vdash \ptyp_1 \shapecompat \ptyp_2$, holds when two partial types agree in structural shape modulo lifetimes and undefined components, while still comparing the types beneath any undefined components.
Each of these three has routine rules, so we defer to \citep{pearce2021featherweight} for their full definition.

\paragraph{Auxiliary Functions}
The typing rules also use \FR's standard auxiliary functions~\citep{pearce2021featherweight}.
Each is defined by straightforward case analysis over its input so we summarize their roles rather than reproduce the full definitions.
Several auxiliary predicates characterize accessibility.
$\copyfn(\typ)$ holds when $\typ$ admits copy semantics.
$\readProh(\varctx, \lval)$ holds when $\lval$ is currently mutably borrowed in $\varctx$, whereas $\writeProh(\varctx, \lval)$ holds when it is borrowed in either mode.
$\mutpred(\varctx, \lval)$ holds when $\lval$ is mutably accessible: no dereference in $\lval$ steps through an immutable borrow.

Three update functions modify the environment.
$\movefn(\varctx, \lval)$ marks the value at $\lval$ as moved out by replacing the corresponding component of its slot type with $\undefT{\typ}$, where $\typ$ is the component's former type.
$\writefn{0}(\varctx, \lval, \typ)$ updates $\varctx$ to reflect assigning a value of type $\typ$ through $\lval$.
$\dropfn(\varctx, m)$ removes all bindings declared in lifetime $m$.

\paragraph{Typing Rules for Terms}
Each rule in \cref{fig:typing} types one corresponding term form from \cref{fig:grammar}.
\textsc{T-Const} checks literals and runtime values against $\storectx$ and leaves the environment unchanged.
\textsc{T-Box} assigns $\kw{box}\ \term$ the type $\bx\ \typ$ when $\term$ has type $\typ$, while preserving any environment changes caused by evaluating $\term$.
\textsc{T-Copy} and \textsc{T-Move} encode Rust's familiar distinction between copy and move semantics.
\textsc{T-Copy} permits copying an lval only when its type satisfies $\copyfn$ and $\readProh$ is false.
\textsc{T-Move} returns the lval's type, requires $\writeProh$ to be false, and marks the corresponding component of its slot as moved out via $\movefn$.

\textsc{T-ImmBorrow} and \textsc{T-MutBorrow} simulate Rust's distinction between shared and mutable borrows, establishing their mutual exclusion.
\textsc{T-ImmBorrow} creates a shared borrow only when $\lval$ is not currently \emph{mutably} borrows, i.e., $\readProh(\varctx,\lval)$ is false.
This permits multiple shared borrows while excluding simultaneous mutable borrows.
\textsc{T-MutBorrow} creates a mutable borrow only when $\lval$ is not currently borrowed in \emph{any} way, i.e., $\writeProh(\varctx,\lval)$ is false.
This reflects Rust's requirement that mutable access excludes any simultaneous borrows.
Additionally, $\lval$ must be mutable.
The borrow expression itself leaves the environment unchanged.
Its borrow information is recorded only when the resulting reference is stored through a declaration or assignment.

\textsc{T-Seq} threads the environment through a sequence of terms from left to right and gives the sequence the type of its last term.
\textsc{T-Block} types $\{^{m}\, \overline{\term}\}$ in its own lifetime $m$.
Its result type must be well formed in the enclosing lifetime $\lft$, which prevents borrows of block-local storage from escaping. After exiting the block, $\dropfn$ removes any bindings declared in $m$.
The premise $\lft \LifetimeChild m$, requiring $m$ to be the immediate child of $\lft$, corrects a missing assumption in \FR's typing that $m$ is inside $\lft$. We specify this as an immediate-child relationship, as one can always choose an immediate child $m$ in place of any assigned $m'$ inside $\lft$.

\textsc{T-Declare} first types the initializer and then extends the resulting environment with a fresh binding $x \mapsto \slot{\typ}{\lft}$. The freshness check is made in the post-initializer environment $\varctx_2$, rather than in $\varctx_1$ by~\citet{pearce2021featherweight}: otherwise an initializer can redeclare $x$ and return it to the environment before the outer declaration installs its binding.
\textsc{T-Assign} types the right-hand side to obtain $\varctx_2$ and $\typ_2$, then resolves the left-hand side in $\varctx_2$.
The lval-typing environment $\varctx_2$ is another correction to \FR's typing, since evaluating the right-hand side may re-type the path under $\lval$ (e.g., $*p = \{p;mut\,w\}$ moves $p$ out, thus making access to $*p$ invalid). The rule then requires $\typ_2$ to be shape-compatible with the destination's stored partial type and well formed for the destination's declaration lifetime $m$, updates the environment via $\writefn{0}$, and requires $\writeProh(\varctx_3,\lval)$ to be false.

\paragraph{Type and Borrow Safety}
\label{sec:lang-soundness}
\citet{pearce2021featherweight} establish \emph{type and borrow safety} for \FR: a well-typed program does not get stuck, and if it terminates, it does so in a store that respects the borrow invariant, i.e., free of dangling or duplicated owning references and with at most one mutable borrow per location. We leverage this result to treat \FR as a meaningful proxy for a usefully complex and expressive type system, motivating our choice of \FR.

\paragraph{Mechanization and Corrections}
We fully mechanize \FR in Lean, including syntax, typing, operational semantics, and type and borrow safety proofs.
In this process, we corrected the stated typing rules in several places.
The main corrections to \textsc{T-Block}, \textsc{T-Declare}, and \textsc{T-Assign} are marked with grey boxes in \cref{fig:typing}.
Beyond these typing adaptations, we correct the rules for unit values, align the runtime semantics with the reference implementation by \citet{pearce2021featherweight}, and strengthen the progress invariant with borrow safety and the linearizability property of \citet{followup}.
We refer to our extended writeup alongside our mechanization.

%% file: figures/grammar-fr.tex
\begin{figure}[tp]
  \small
  \centering
  \begin{minipage}{\textwidth}
    \begin{minipage}[t]{0.45\textwidth}
      \centering
      \begin{tabular}{l@{\hspace{5mm}}l}
$\term ::= $                                                                                       & Term \\
\quad $\val$                                                                                       & \quad Value \\
\quad $\kw{box}\ \term$                                                                            & \quad Heap allocation \\
\quad $\kw{copy}\ \lval$                                                                           & \quad Copy \\
\quad $\lval$                                                                                      & \quad Move \\
\quad $\bshr\,[\kw{mut}]\ \lval$                                                                   & \quad (Mutable) borrow \\
\quad $\{^{\lft}\, \overline{\term}\}$                                                             & \quad Block \\
\quad $\kw{let}\ \kw{mut}\ x\, =\, \term$                                                          & \quad Declaration \\
\quad $\lval\, =\, \term$                                                                          & \quad Assignment \\
\\
$\lval ::= $                                                                                       & LVal \\
\quad $x$                                                                                          & \quad Variable \\
\quad $*\lval$                                                                                     & \quad Dereference \\
\\
$\pval ::= $                                                                                       & Partial value \\
\quad $\val$                                                                                       & \quad Value \\
\quad $\bot$                                                                                       & \quad Undefined \\
      \end{tabular}
    \end{minipage}
    \hspace{6mm}
    \begin{minipage}[t]{0.05\textwidth}
      \begin{tabular}{l|l}
& \\ & \\ & \\ & \\ & \\ & \\ & \\ & \\ & \\
& \\ & \\ & \\ & \\ & \\ & \\ & \\ & \\
      \end{tabular}
    \end{minipage}
    \hspace{4mm}
    \begin{minipage}[t]{0.40\textwidth}
      \centering
      \begin{tabular}{l@{\hspace{4mm}}l}
$\val ::= $                                                                                        & Value \\
\quad $\epsilon$                                                                                   & \quad Unit value \\
\quad $\litval$                                                                                    & \quad Integer constant \\
\quad $\ownref$                                                                                    & \quad Owning reference \\
\quad $\borref$                                                                                    & \quad Borrowed reference \\
\\
$\ptyp ::= $                                                                                       & Partial type \\
\quad $\typ$                                                                                       & \quad Type \\
\quad $\bx\,\ptyp$                                                                                 & \quad Partial box \\
\quad $\undefT{\typ}$                                                                              & \quad Moved-out slot \\
\\
$\typ ::= $                                                                                        & Type \\
\quad $\epsilon$                                                                                   & \quad Unit type \\
\quad $\kw{int}$                                                                                   & \quad Integer type \\
\quad $\bmut\ \lval$                                                                               & \quad Mutable borrow \\
\quad $\bshr\ \lval$                                                                               & \quad Shared borrow \\
\quad $\bx\,\typ$                                                                                  & \quad Box type \\
      \end{tabular}
    \end{minipage}
  \end{minipage}
  \vspace{-2mm}
  \caption{\FR's Syntax, following \citet{pearce2021featherweight}.}
  \label{fig:grammar}
\end{figure}

%% file: figures/typing-fr.tex
\begin{figure}
  \small
  \centering
  \begin{equation*}
  \begin{array}{r@{\qquad}l}
    [\textsc{T-Const}]
    \inference{
      \storectx \vdash \val : \typ
    }{
      \varctx \vdash \langle \val : \typ \rangle^{\lft}_{\storectx} \dashv \varctx
    }
    &
    [\textsc{T-Box}]
    \inference{
      \varctx_1 \vdash \langle \term : \typ \rangle^{\lft}_{\storectx} \dashv \varctx_2
    }{
      \varctx_1 \vdash \langle \kw{box}\ \term : \bx\,\typ \rangle^{\lft}_{\storectx} \dashv \varctx_2
    }
    \\[6mm]
    [\textsc{T-Copy}]
    \inference{
      \varctx \vdash \lval : \slot{\typ}{m}
      \quad
      \copyfn(\typ)
      \\
      \neg\readProh(\varctx, \lval)
    }{
      \varctx \vdash \langle \kw{copy}\ \lval : \typ \rangle^{\lft}_{\storectx} \dashv \varctx
    }
    &
    [\textsc{T-Move}]
    \inference{
      \varctx_1 \vdash \lval : \slot{\typ}{m}
      \quad
      \varctx_2 = \movefn(\varctx_1, \lval)
      \\
      \neg\writeProh(\varctx_1, \lval)
    }{
      \varctx_1 \vdash \langle \lval : \typ \rangle^{\lft}_{\storectx} \dashv \varctx_2
    }
    \\[6mm]
    [\textsc{T-ImmBorrow}]
    \inference{
      \varctx \vdash \lval : \slot{\typ}{m}
      \\
      \neg\readProh(\varctx, \lval)
    }{
      \varctx \vdash \langle \bshr\ \lval : \bshr\ \lval \rangle^{\lft}_{\storectx} \dashv \varctx
    }
    &
    [\textsc{T-MutBorrow}]
    \inference{
      \varctx \vdash \lval : \slot{\typ}{m}
      \quad
      \mutpred(\varctx, \lval)
      \\
      \neg\writeProh(\varctx, \lval)
    }{
      \varctx \vdash \langle \bmut\ \lval : \bmut\ \lval \rangle^{\lft}_{\storectx} \dashv \varctx
    }
    \\[6mm]
    [\textsc{T-Seq}]
    \inference{
      \varctx_1 \vdash \langle \term_1 : \typ_1 \rangle^{\lft}_{\storectx} \dashv \varctx_2
      \\
      \cdots
      \\
      \varctx_n \vdash \langle \term_n : \typ_n \rangle^{\lft}_{\storectx} \dashv \varctx_{n+1}
    }{
      \varctx_1 \vdash \langle \overline{\term} : \typ_n \rangle^{\lft}_{\storectx} \dashv \varctx_{n+1}
    }
    &
    [\textsc{T-Block}]
    \inference{
      \dgr{\lft \LifetimeChild m}
      \quad
      \varctx_1 \vdash \langle \overline{\term} : \typ \rangle^{m}_{\storectx} \dashv \varctx_2
      \\
      \varctx_2 \vdash \typ \outlives \lft
      \quad
      \varctx_3 = \dropfn(\varctx_2, m)
    }{
      \varctx_1 \vdash \langle \{^{m}\, \overline{\term}\} : \typ \rangle^{\lft}_{\storectx} \dashv \varctx_3
    }
    \\[6mm]
    [\textsc{T-Declare}]
    \inference{
      \varctx_1 \vdash \langle \term : \typ \rangle^{\lft}_{\storectx} \dashv \varctx_2
      \\
      \dgr{x \notin \mathrm{dom}(\varctx_2)}
      \\
      \varctx_3 = \varctx_2[x \mapsto \slot{\typ}{\lft}]
    }{
      \varctx_1 \vdash \langle \kw{let}\ \kw{mut}\ x = \term : \epsilon \rangle^{\lft}_{\storectx} \dashv \varctx_3
    }
    &
    [\textsc{T-Assign}]
    \inference{
      \varctx_1 \vdash \langle \term : \typ_2 \rangle^{\lft}_{\storectx} \dashv \varctx_2
      \quad
      \dgr{\varctx_2} \vdash \lval : \slot{\ptyp_1}{m}
      \\
      \varctx_2 \vdash \ptyp_1 \shapecompat \typ_2
      \quad
      \varctx_2 \vdash \typ_2 \outlives m
      \\
      \varctx_3 = \writefn{0}(\varctx_2, \lval, \typ_2)
      \\
      \neg\writeProh(\varctx_3, \lval)
    }{
      \varctx_1 \vdash \langle \lval = \term : \epsilon \rangle^{\lft}_{\storectx} \dashv \varctx_3
    }
  \end{array}
  \end{equation*}
  \vspace{-3mm}
  \caption{\FR's typing rules for terms~\citep{pearce2021featherweight}. \dgr{\text{Grey boxes}} highlight corrections made by our mechanization.}
  \label{fig:typing}
\end{figure}

%% file: sections-new/method.tex
\newcommand{\uwidehat}[1]{%
  \mathpalette\douwidehat{#1}%
}
\makeatletter
\newcommand{\douwidehat}[2]{%
  \sbox0{$\m@th#1\widehat{\hphantom{#2}}$}%
  \sbox2{$\m@th#1x$}
  \sbox4{$\m@th#1#2$}
  \dimen0=\ht0
  \advance\dimen0 -.57\ht2
  \dimen2=\dp4
  \rlap{%
    \raisebox{\dimexpr\dimen0-\dimen2}{%
      \makebox[\wd4][c]{\scalebox{0.75}[-1]{\box0}}%
    }%
  }%
  {#2}%
}
\makeatother
\newcommand{\ppartial}[1]{\ensuremath{\uwidehat{#1}}}
\newcommand{\hole}{\ensuremath{\ppartial{\phantom{x}}}}
\newcommand{\pterm}{\ensuremath{\ppartial{t}}}
\newcommand{\ptermp}{\ensuremath{\ppartial{t'}}}
\newcommand{\plval}{\ensuremath{\ppartial{w}}}
\providecommand{\realizes}{\rightsquigarrow}
\providecommand{\lang}{\mathcal{L}}
\providecommand{\Compile}{\ensuremath{\mathcal{C}}}
\providecommand{\Pcompile}{\ensuremath{\mathcal{G}}}
\newcommand{\sealorfr}{\ensuremath{\sealor_{\text{\FR}}}}
\newcommand{\frlang}{\ensuremath{\lang_{\text{\FR}}}}
\newcommand{\Compilefr}{\ensuremath{\Compile_{\text{\FR}}}}
\newcommand{\Pcompilefr}{\ensuremath{\Pcompile_{\text{\FR}}}}

\input{figures/partial-realization-fr.tex}

\section{Instantiating Generative Compilation on \FR}
\label{sec:method}

In this section, we instantiate the generic framework of generative compilation (\cref{sec:framework}) by defining a concrete sealor for \FR.
We deliberately make this sealor syntax-guided, keeping both its implementation and its proof lightweight, thereby making the approach easier to adopt for real languages.
Despite this, we show that it is globally complete (\cref{sec:fr-preservation}) and sound on an important class of partial programs (\cref{sec:fr-soundness}).
The developments in this section are fully mechanized in Lean.

\subsection{Partial Syntax and Realization for \FR}

We first define \FR's \emph{partial syntax}, which lays the basis for the sealor definition.
It describes prefixes of syntactically valid \FR programs.
We then define \emph{realization}, which relates a partial-syntax string to a full-syntax string obtained by extending the partial string.
This relation is useful because completeness and soundness connect partial programs with their possible full programs.

\paragraph{Partial Syntax}
We define partial syntax relative to \FR's full syntax in \cref{fig:grammar}, focusing on source-level terms that may arise during autoregressive generation.
\cref{fig:partial-grammar} gives the resulting grammar of partial terms $\pterm$, which extends over a full term by allowing one of its parts to be only partially generated.
The first two cases, $\hole$ and $\term$, respectively describe the two endpoints: no part of the current term has yet been generated, and generation has produced a full term.
Each remaining production follows one term form from \cref{fig:grammar}, replacing the component currently being generated with its partial counterpart. 
For example, $\ppartial{\val}$ denotes a partial value, and $\kw{copy}\ \plval$ a copy term with a partial lval.
$\{^{\lft}\,\overline{\term};\,\pterm$ represents an unclosed block with a partial tail.
The lifetime label $\lft$ is an internal annotation, fixed once the opening brace is generated, rather than text generated by the model.
The production $\plval$ covers both moves and assignments, which share a prefix until an $=$ is generated.

Partial lvals $\plval$ and partial values $\ppartial{\val}$ complete the picture.
They are not listed in \cref{fig:partial-grammar} but are described in prose next.
A partial lval is either $\hole$, a full $\lval$, a partially generated variable name $\ppartial{x}$, or a dereference $*\plval$ whose operand remains partial.
A partial value $\ppartial{\val}$ is either a full value $\val$ or a partially generated integer constant; the unit value $\epsilon$ is always full.

For simplicity, we omit prefixes concerning only keywords, such as $\kw{let\ mut}$, $\kw{\&mut}$, or prefixes of either.
Such prefixes contain no partially generated term, lval, or value for the sealor to preserve, so we fold them into the case of $\hole$.

\paragraph{Realization of Partial Programs}
Realization models autoregressive generation: decoding may add text at the unfinished frontier, but cannot revise an earlier prefix.
It differs from sealing, which may discard or alter unfinished fragments.
This distinction will be made concrete once we introduce $\sealorfr$.
Our completeness result presented later in this section connects these two notions.

We write $\pterm \realizes \term$ when $\term$ realizes $\pterm$, and define this relation case by case in \cref{fig:realizes}.
In most cases, realization retains the term form already determined by the generated prefix and recurses on its unfinished subterm.
The two clauses for $\plval$ make its Move/Assignment ambiguity explicit: once $\plval$ realizes an lval $\lval$, the term may realize either the move $\lval$ or an assignment $\lval = \term$ with any right-hand side $\term$.
For lvals, the relation $\plval \realizes \lval$ has four cases: $\hole \realizes \lval$ for any $\lval$; $\lval \realizes \lval$; $\ppartial{x} \realizes x$ when $\ppartial{x}$ is a prefix of $x$; and $*\plval \realizes *\lval$ when $\plval \realizes \lval$.

\input{figures/sealor-fr.tex}
\subsection{$\sealorfr$: A Syntax-Guided Sealor for \FR}
\label{sec:fr-sealor}

\cref{fig:sealor} defines our syntax-guided sealor $\sealorfr$ by case analysis over partial syntax.
It maps each partial term form to a full \FR term.
As required by the generic definition, $\sealorfr$ is total.
Prefixes outside the partial syntax are handled separately by returning them unchanged, so that the underlying compiler may report a syntax error.

At a high level, $\sealorfr$ abstracts type checking through a syntax-guided transformation.
Rather than deciding exactly whether a partial term admits a well-typed completion, it constructs a term whose well-typedness is necessary for that of every possible completion, thereby preserving completeness.
Accordingly, sealing preserves generated structure when it exposes useful typing obligations, while abstracting away other obligations when they require information not yet generated or analysis beyond syntax.
Several cases therefore seal directly to $\eps$: $\kw{copy}\ \plval$, $\plval$, $\bshr\,[\kw{mut}]\ \plval$, and $\kw{let}\ \kw{mut}\ \ppartial{x}$.
Their unfinished components must be resolved before the corresponding typing rules can establish useful premises.
Here, $\eps$ provides a well-typed placeholder with no additional typing obligations.
Moreover, for $\kw{let}\ \kw{mut}\ x = \pterm$ and $\lval = \pterm$, $\sealorfr$ recurses only on $\pterm$.
It drops the enclosing declaration or assignment, including the declared variable $x$ or assigned lval $\lval$, and hence their associated typing obligations.

This abstraction is deliberately not maximally aggressive.
With completeness preserved, we seek to retain as much soundness as possible.
Fully generated terms are therefore unchanged, including both a standalone $\term$ and all already-full terms $\overline{\term}$ inside an unclosed block.
Furthermore, $\sealorfr$ preserves partial subterms that can themselves contain generated structure relevant to later typing, rather than discarding them outright.
Thus, it recurses on $\pterm$ in $\kw{box}\ \pterm$, $\{^{\lft}\,\overline{\term};\,\pterm$, $\kw{let}\ \kw{mut}\ x = \pterm$, and $\lval = \pterm$.
At the same time, we do not make $\sealorfr$ more sound at the cost of a lightweight design.
For instance, in $\kw{let}\ \kw{mut}\ x = \pterm$ or $\lval = \pterm$, one could use the type of $x$ or $\lval$ to further constrain $\pterm$.
We do not pursue this in $\sealorfr$, because it would require typing information beyond our syntax-guided design.
This would add implementation cost without substantial benefit, since $\sealorfr$ is already sound on an important class of programs, as we show in \cref{sec:fr-soundness}.

A few remaining cases complete the definition.
$\hole$ seals to $\eps$, since either no part of the current term has been generated or the generated prefix consists only of keywords.
An unfinished partial value $\ppartial{\val}$ also seals to $\eps$.
Although a partially generated integer already suggests type $\kw{int}$, we choose this simpler syntax-guided rule rather than recover that information, at a small cost to soundness.
The bare $\plval$ production likewise seals to $\eps$: before an $=$ is generated, it remains ambiguous between a move and the left-hand side of an assignment, and sealing it as either would prematurely commit to one interpretation.
Finally, for $\{^{\lft}\,\overline{\term};\,\pterm$, the sealor appends a trailing $\eps$ and the closing brace after sealing $\pterm$, ensuring that the result is a closed block.

\subsection{Global Completeness of $\sealorfr$}
\label{sec:fr-preservation}

$\sealorfr$'s design was motivated by global completeness, which we formally prove next.
We first show that whenever a partial term realizes a well-typed full term, sealing the partial term with $\sealorfr$ also yields a well-typed term.

\begin{theorem}[$\sealorfr$ Maintains Well-Typedness of Realizations]\label{thm:seal-preservation}
For all partial terms $\pterm$, typing environments $\varctx, \varctx'$, types $\typ$, lifetimes $\lft$, and store typings $\storectx$:
\[
\exists \term.\ \pterm \realizes \term\, \wedge\, \varctx \vdash \langle \term : \typ \rangle^{\lft}_{\storectx} \dashv \varctx' \implies \exists \typ^\star, \varctx^\star.\ \varctx \vdash \langle \sealorfr(\pterm) : \typ^\star \rangle^{\lft}_{\storectx} \dashv \varctx^\star.
\]
\end{theorem}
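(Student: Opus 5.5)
We prove the statement by structural induction on the partial term $\pterm$, following the grammar of \cref{fig:partial-grammar}. In each case we invert the realization relation of \cref{fig:realizes} to learn the shape of $\term$. We then invert the typing derivation of $\term$ from \cref{fig:typing} to extract sub-derivations for its components, and reassemble a derivation for $\sealorfr(\pterm)$ according to \cref{fig:sealor}. The claim is stated existentially in $\typ^\star$ and $\varctx^\star$ for a reason: sealing changes both the result type and the output environment, since dropped moves and assignments no longer update $\varctx$. So no relation between $\varctx'$ and $\varctx^\star$ needs to be maintained in the outermost statement.

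\textbf{Base cases.} Several cases seal to $\eps$: $\hole$, unfinished $\ppartial{\val}$, $\kw{copy}\ \plval$, bare $\plval$, $\bshr\,[\kw{mut}]\ \plval$, and $\kw{let}\ \kw{mut}\ \ppartial{x}$. For these we ignore the hypothesis and use \textsc{T-Const} with $\storectx \vdash \eps : \eps$, taking $\typ^\star = \eps$ and $\varctx^\star = \varctx$. The case of a full term $\term$ is immediate, since realization forces $\term$ itself and sealing is the identity.

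\textbf{Recursive cases.} For $\kw{box}\ \pterm$, inversion of \textsc{T-Box} gives a typing of the realized subterm; the induction hypothesis then gives a typing of $\sealorfr(\pterm)$, and we reapply \textsc{T-Box}. For $\kw{let}\ \kw{mut}\ x = \pterm$ and $\lval = \pterm$, the realization is either a declaration or assignment whose right-hand side realizes $\pterm$, or, in the $\plval$ clause, a move, which is handled as a base case. Inverting \textsc{T-Declare} or \textsc{T-Assign} yields the first premise, a typing of the right-hand side in $\varctx$ at lifetime $\lft$. Since $\sealorfr$ keeps only the sealed right-hand side, the induction hypothesis finishes the case directly.

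\textbf{Unclosed block.} The main obstacle is $\{^{m}\,\overline{\term};\,\pterm$, sealed to $\{^{m}\,\overline{\term};\,\sealorfr(\pterm);\,\eps\}$. A realization has the form $\{^{m}\,\overline{\term};\,\term';\,\overline{\term''}\}$ with $\pterm \realizes \term'$, possibly with further terms after $\term'$. Inverting \textsc{T-Block} and \textsc{T-Seq} gives typings of the prefix $\overline{\term}$ threading $\varctx$ to some $\varctx_k$, and of $\term'$ from $\varctx_k$ at lifetime $m$, with $\lft \LifetimeChild m$. The induction hypothesis gives $\sealorfr(\pterm)$ some type from $\varctx_k$ to some $\varctx^\dagger$, and appending $\eps$ via \textsc{T-Seq} gives the body type $\eps$. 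The remaining \textsc{T-Block} premises are easy: $\varctx^\dagger \vdash \eps \outlives \lft$ holds trivially, and $\dropfn$ is total. This trailing $\eps$ is exactly why the sealor appends it: it removes any need to show that the unknown $\typ^\star$ is well formed in $\lft$.

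\textbf{Generalization.} The induction must be set up for the block case. The hypothesis has to be applicable at an arbitrary environment and lifetime, here $\varctx_k$ and $m$, which the statement already quantifies over. We also need that the typing of a sequence decomposes at an arbitrary split point, a small lemma about \textsc{T-Seq} proved by induction on the list. In Lean we expect the bulk of the work to be this inversion bookkeeping, using mutual or nested induction over term lists inside blocks.
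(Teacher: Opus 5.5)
Your proposal is correct and follows essentially the paper's proof. You induct on the structure of $\pterm$ where the paper inducts on the derivation of $\pterm \realizes \term$, which coincide here since realization is syntax-directed; you handle the abstracted cases by \textsc{T-Const}, the dropped binders by inverting \textsc{T-Declare}/\textsc{T-Assign}, and the unclosed block by splitting \textsc{T-Seq}, applying the IH at $\varctx_{k}$ and $m$, and closing with the trailing $\eps$. One small slip: by \cref{fig:sealor}, $\sealorfr(\kw{box}\ \ptermp) = \sealorfr(\ptermp)$ drops the $\kw{box}$, so the IH closes that case directly and there is no \textsc{T-Box} to reapply.
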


The proof follows \cref{fig:partial-grammar} row by row, in order, naming each case after that row's description.

\begin{proof}
Fix $\pterm$ and, using the hypothesis, a witness $\term$ with $\pterm \realizes \term$ and $\varctx \vdash \langle \term : \typ \rangle^{\lft}_{\storectx} \dashv \varctx'$. We proceed by induction on the derivation of $\pterm \realizes \term$.
\begin{itemize}[leftmargin=*]

\item \emph{Immediately sealed to $\eps$}: $\pterm = \hole$, $\pterm = \ppartial{\val}$, $\pterm = \kw{copy}\,\plval$, $\pterm = \plval$, $\pterm = \bshr\,[\kw{mut}]\,\plval$, or $\pterm = \kw{let}\ \kw{mut}\ \ppartial{x}$. In every case $\sealorfr(\pterm) = \eps$ unconditionally, well-typed by \textsc{T-Const}.

\item \emph{Full term}: $\pterm = \term$. Here $\sealorfr(\pterm) = \term$, and the conclusion follows immediately from the premise.

\item \emph{Partial heap allocation}: $\pterm = \kw{box}\,\ptermp$, so $\term = \kw{box}\,\term'$ with $\ptermp \realizes \term'$. By \textsc{T-Box}, $\term'$ is well-typed in $\varctx$; by the IH, so is $\sealorfr(\pterm) = \sealorfr(\ptermp)$.

\item \emph{Unclosed block with partial tail}: $\pterm = \{^{m}\, \term_1;\, \ldots;\, \term_{k-1};\, \pterm_k$, so $\term = \{^{m}\, \term_1;\, \ldots;\, \term_n \}$ for some $n \ge k$ with $\pterm_k \realizes \term_k$, and by definition $\sealorfr(\pterm) = \{^{m}\, \term_1;\, \ldots;\, \term_{k-1};\, \sealorfr(\pterm_k);\, \eps \}$. Inverting the typing of $\term$ via \textsc{T-Block} and \textsc{T-Seq} gives contexts $\varctx = \varctx_0, \varctx_1, \ldots, \varctx_n$ and types $\typ_1, \ldots, \typ_n$ with $\varctx_{i-1} \vdash \langle \term_i : \typ_i \rangle^{m}_{\storectx} \dashv \varctx_i$ for every $i$. By the IH, $\sealorfr(\pterm_k)$ is well-typed in $\varctx_{k-1}$; a final $\eps$ typed by \textsc{T-Const} closes the sequence by \textsc{T-Seq}, and \textsc{T-Block} then gives $\sealorfr(\pterm)$ well-typed in $\varctx$.

\item \emph{Declaration, partial init}: $\pterm = \kw{let}\ \kw{mut}\ x = \ptermp$, so $\term = \kw{let}\ \kw{mut}\ x = \term'$ with $\ptermp \realizes \term'$. By \textsc{T-Declare}, $\term'$ is well-typed in $\varctx$; by the IH, so is $\sealorfr(\pterm) = \sealorfr(\ptermp)$.

\item \emph{Assignment, partial RHS}: $\pterm = \lval = \ptermp$, so $\term = \lval = \term'$ with $\ptermp \realizes \term'$. By \textsc{T-Assign}, $\term'$ is well-typed in $\varctx$; by the IH, so is $\sealorfr(\pterm) = \sealorfr(\ptermp)$.

\end{itemize}

\end{proof}

Let $\frlang$ denote the set of well-typed top-level \FR terms.
As discussed in \cref{sec:lang}, \FR's top-level terms are typed under the root lifetime $\lft_{\mathrm{root}}$, empty typing environment, and empty store typing.
That is, $\frlang := \{\, \term \mid \exists \typ, \varctx.\ \emptyset \vdash \langle \term : \typ \rangle^{\lft_{\mathrm{root}}}_{\emptyset} \dashv \varctx \,\}$.
\cref{thm:seal-preservation} is stronger than the top-level result we ultimately need: it permits arbitrary initial typing environments and store typings, as well as lifetime.
This allows the proof to apply to intermediate terms within larger programs.
Specializing it to top-level typing yields completeness for top-level partial syntax.

\begin{corollary}[$\sealorfr$ is Globally Complete for Partial Syntax]
\label{cor:partial}
For every partial term $\pterm$,
\[
\exists \term.\ \pterm \realizes \term\, \wedge\, \term \in \frlang \implies \sealorfr(\pterm) \in \frlang.
\]
\end{corollary}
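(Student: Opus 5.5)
The corollary follows by instantiating \cref{thm:seal-preservation} at the top-level typing parameters and unfolding the definition of $\frlang$.

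Fix a partial term $\pterm$ and assume a witness $\term$ with $\pterm \realizes \term$ and $\term \in \frlang$. By the definition of $\frlang$, membership gives a type $\typ$ and an output environment $\varctx'$ such that $\emptyset \vdash \langle \term : \typ \rangle^{\lft_{\mathrm{root}}}_{\emptyset} \dashv \varctx'$.

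We then apply \cref{thm:seal-preservation} with the following instantiation:
\begin{itemize}
\item typing environment $\varctx := \emptyset$,
\item store typing $\storectx := \emptyset$,
\item lifetime $\lft := \lft_{\mathrm{root}}$,
\item type $\typ$ and output environment $\varctx'$ as just obtained.
\end{itemize}
Its hypothesis is exactly the conjunction $\pterm \realizes \term \wedge \emptyset \vdash \langle \term : \typ \rangle^{\lft_{\mathrm{root}}}_{\emptyset} \dashv \varctx'$, witnessed by the same $\term$. The theorem therefore yields $\typ^\star$ and $\varctx^\star$ with $\emptyset \vdash \langle \sealorfr(\pterm) : \typ^\star \rangle^{\lft_{\mathrm{root}}}_{\emptyset} \dashv \varctx^\star$. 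Taking $\typ^\star$ and $\varctx^\star$ as the existential witnesses in the definition of $\frlang$ gives $\sealorfr(\pterm) \in \frlang$.

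The only step needing care is confirming that the existential structure of $\frlang$, which quantifies over a type and an output environment, matches the conclusion of \cref{thm:seal-preservation}. It does, so the argument is a direct specialization with no induction. All the real work, the induction over realization derivations, is already carried out in the proof of \cref{thm:seal-preservation}.
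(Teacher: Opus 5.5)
Your proposal is correct and matches the paper's proof: both unfold $\frlang$ to obtain a typing derivation under $\emptyset$, $\emptyset$, and $\lft_{\mathrm{root}}$, then specialize \cref{thm:seal-preservation} to those parameters. The resulting $\typ^\star, \varctx^\star$ witness $\sealorfr(\pterm) \in \frlang$ with no further induction.
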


\begin{proof}
Fix a partial term $\pterm$ satisfying the premise, and choose a term $\term$ such that $\pterm \realizes \term$ and $\term \in \frlang$. By the definition of $\frlang$, there exist $\typ$ and $\varctx$ such that $\emptyset \vdash \langle \term : \typ \rangle^{\lft_{\mathrm{root}}}_{\emptyset} \dashv \varctx$.
Applying \cref{thm:seal-preservation} with initial typing environment and store typing $\emptyset$ yields some $\typ^\star$ and $\varctx^\star$ such that $\emptyset \vdash \langle \sealorfr(\pterm) : \typ^\star \rangle^{\lft_{\mathrm{root}}}_{\emptyset} \dashv \varctx^\star$.
Hence $\sealorfr(\pterm) \in \frlang$.
\end{proof}

The generic framework defines global completeness over arbitrary strings rather than partial-syntax terms, so we finally lift \cref{cor:partial} to that setting.

\begin{theorem}[$\sealorfr$ Is Globally Complete for Arbitrary Strings]
\label{thm:sealorfr-complete}
The sealor $\sealorfr$ is globally complete with respect to \FR, for arbitrary strings as input, in the sense of \cref{def:sealor-sound-complete}.
\end{theorem}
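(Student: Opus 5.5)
We must show that for every string $c \in \alphabet^{\ast}$, $\exists w.\ c \concat w \in \frlang$ implies $\sealorfr(c) \in \frlang$. The plan is to reduce this to \cref{cor:partial} by showing that any string with a valid extension is a partial term whose realizations include that extension. Fix $c$ and $w$ with $\term := c \concat w \in \frlang$. We split on whether $c$ parses as a partial term $\pterm$ in the sense of \cref{fig:partial-grammar}.

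If $c$ is not in the partial syntax, then $\sealorfr(c) = c$ by definition. We argue this case cannot occur. The key lemma, which we prove first, is that the partial syntax exactly captures prefixes of full source-level terms. More precisely, for every full term $\term$ and every split $\term = c \concat w$, the prefix $c$ parses as some partial term $\pterm$ with $\pterm \realizes \term$. This is proved by structural induction on $\term$, with a case analysis on where the split point falls.

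\begin{itemize}
\item If the split falls before the term starts, or inside a keyword, we take $\hole$, since keyword prefixes are folded into $\hole$.
\item If the split falls at the end of the term, we take $\term$ itself.
\item If the split falls inside a subterm, we take the corresponding production and recurse. For example, inside a block after $k-1$ completed terms, we take $\{^{\lft}\,\overline{\term};\,\pterm_k$. A split inside an integer literal gives $\ppartial{\val}$, and a split inside a variable name gives $\ppartial{x}$.
\item A prefix of $\lval = \term'$ ending before the $=$ is the bare $\plval$ production, which realizes the assignment by the second $\plval$ clause of \cref{fig:realizes}.
\end{itemize}

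Since $\term$ is well typed, it is in particular source-level, so no reference values need to be handled. The lifetime annotation of a partial block is taken from the full block.

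With the lemma in hand, $c$ parses as some $\pterm$ with $\pterm \realizes \term$ and $\term \in \frlang$. \cref{cor:partial} then yields $\sealorfr(\pterm) \in \frlang$, that is, $\sealorfr(c) \in \frlang$. This also disposes of the first case: an input that falls outside the partial syntax has no valid extension, so the hypothesis is vacuous for it.

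The main obstacle is the lemma itself, specifically its interaction between strings and abstract syntax. Two points need care. First, the parse of $c$ must be well defined (or at least one parse realizing $\term$ must exist). Second, the sealor must be applied to a parse that agrees with the one realizing $\term$. Ambiguities arise at token boundaries, for instance whether the identifier prefix $x$ is the full variable $x$ or a partial $\ppartial{x}$ of a longer name, and whether a trailing integer is complete. I would resolve this first by showing that every such ambiguous choice is realized by $\term$ under a suitable clause of $\realizes$. I would then check that the sealor sends all candidate parses to well-typed terms. Where this does not hold directly, I would fix a canonical parse that always reads the maximal partial form (e.g.\ $\ppartial{x}$, $\ppartial{\val}$, or bare $\plval$), since these are realized by every extension and most of them seal to $\eps$. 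In the mechanization this amounts to a prefix-closure lemma for the grammar stated over the abstract partial syntax, with the string level handled by the parser's specification.
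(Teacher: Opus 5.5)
Your proposal is correct and follows the paper's route. The paper likewise parses $s$ into a partial term $\pterm$ realized by the parse $\term$ of $s \concat w$, applies \cref{cor:partial}, and notes that a string outside the partial syntax has no well-typed extension, so your prefix-closure lemma makes explicit what the paper asserts in one line. Of your two ways of handling parse ambiguity, only the canonical maximal-partial parse works: reading a trailing identifier $x$ as a full move is not realized by an extension $xy = \ldots$, and sealing it keeps a move of a possibly undeclared variable. That canonical parse should therefore be part of the string-level definition of $\sealorfr$, not a choice made inside the proof.
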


\begin{proof}
Let $s \in \alphabet^\ast$ satisfy $\exists w.\ s \concat w \in \frlang$. As $s \concat w$ is a well-typed \FR term, parsing $s$ yields a partial term $\pterm$, with keyword-only prefixes represented by $\hole$. And parsing $s \concat w$ yields a term $\term$ such that $\pterm \realizes \term$ and $\term \in \frlang$. By \cref{cor:partial}, $\sealorfr(\pterm) \in \frlang$. This is exactly the result of sealing $s$. Hence $\sealorfr$ accepts every string with a well-typed \FR continuation.
If $s$ does not yield a partial term, then it contains a syntax error and cannot satisfy $\exists w.\ s \concat w \in \frlang$. Here, $\sealorfr$ returns $s$ unchanged.
\end{proof}

Finally, based on \cref{thm:cons-prefix}, the global completeness of the corresponding generative compiler $\Pcompilefr(s) \coloneq \Compilefr(\sealorfr(s))$ follows from that of $\sealorfr$ and an exact compiler $\Compilefr$ for \FR.

\subsection{Selective Soundness of $\sealorfr$}
\label{sec:fr-soundness}

While $\sealorfr$ achieves global completeness, it also abstracts away information that soundness would need in order to remain lightweight and syntax-guided. For instance, $\sealorfr(\kw{copy}\ \ppartial{x}) = \eps$ is accepted unconditionally, even when $\ppartial{x}$ can only realize a moved-out variable.
We nevertheless identify an important class on which $\sealorfr$ is sound: \emph{statement boundaries}, where generation has finished one or more complete statements in a block and is about to begin the next.
This is a useful guarantee in practice because statement boundaries occur frequently during generation.

We now give formal characterizations of this property.
We call $\pterm$ a \emph{statement boundary} if $\pterm = \{^{\lft}\, \overline{\term};\, \hole$.
That is, after a series of full terms in a block, the model is exactly at the point immediately after a semicolon, before any token of what follows has streamed in. Let $\pclass_{\textsc{stmt}} \coloneq \{\, \{^{\lft}\, \overline{\term};\, \hole \mid \overline{\term} \text{ complete}, \ \lft \text{ a lifetime} \,\}$, i.e., the class of statement boundaries.

\begin{lemma}[Statement Boundaries Realize $\sealorfr$'s Output]\label{lem:stmt-self-witness}
For every $\pterm \in \pclass_{\textsc{stmt}}$, $\pterm \realizes \sealorfr(\pterm)$.
\end{lemma}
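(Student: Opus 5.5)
The argument is a direct unfolding of the two definitions involved, with no induction required. Fix $\pterm \in \pclass_{\textsc{stmt}}$, so $\pterm = \{^{\lft}\, \term_1;\, \ldots;\, \term_{k-1};\, \hole$ with every $\term_i$ a full term. By the row of \cref{fig:sealor} for an unclosed block with partial tail, $\sealorfr(\pterm) = \{^{\lft}\, \term_1;\, \ldots;\, \term_{k-1};\, \sealorfr(\hole);\, \eps\}$. The case for $\hole$ gives $\sealorfr(\hole) = \eps$, so
\[
\sealorfr(\pterm) = \{^{\lft}\, \term_1;\, \ldots;\, \term_{k-1};\, \eps;\, \eps\}.
\]

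We then apply the realization clause of \cref{fig:realizes} for unclosed blocks, which is the same clause inverted in the block case of the proof of \cref{thm:seal-preservation}. It states that $\{^{m}\, \term_1;\, \ldots;\, \term_{k-1};\, \pterm_k \realizes \{^{m}\, \term_1;\, \ldots;\, \term_n\}$ whenever $n \ge k$ and $\pterm_k \realizes \term_k$, with the trailing terms $\term_{k+1}, \ldots, \term_n$ arbitrary. We instantiate it with the same lifetime $\lft$ and the same prefix $\term_1, \ldots, \term_{k-1}$. We take $n = k+1$, $\term_k = \eps$ and $\term_{k+1} = \eps$. The remaining side condition is $\hole \realizes \eps$, which is the base clause stating that $\hole$ realizes any term.

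The only point needing care is bookkeeping about the exact shape of the realization rule for blocks. We must check that it allows the lifetime annotation to be carried over unchanged, and that it allows extra full statements after the realized tail; the inserted trailing $\eps$ must be absorbed as one such extra statement. If the mechanized rule is instead phrased recursively, realizing the tail as a nonempty sequence, we would use a small auxiliary fact: $\hole$ realizes the sequence $\eps;\,\eps$, or equivalently $\eps$ followed by any list of full terms. That fact is again immediate from the $\hole$ clause. Either way, no typing or semantic reasoning is involved. The lemma is purely syntactic and serves as the witness step for the later soundness argument: if $\sealorfr(\pterm) \in \frlang$, then taking $w$ to be the text $\eps;\,\eps\}$ exhibits a valid continuation.
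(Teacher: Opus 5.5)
Your proof is correct and takes the same route as the paper's. You unfold $\sealorfr$ on the unclosed block to get $\{^{\lft}\,\overline{\term};\,\eps;\,\eps\}$, then instantiate the block clause of realization with tail $\hole \realizes \eps$ (from the base clause) and trailing sequence $\overline{\term}_1 = \eps$; your remarks on a possible recursive phrasing and on the continuation $w$ are harmless but not needed.
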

\begin{proof}
Write $\pterm = \{^{\lft}\, \overline{\term};\, \hole$, such that $\sealorfr(\pterm) = \{^{\lft}\, \overline{\term};\, \sealorfr(\hole);\, \eps \} = \{^{\lft}\, \overline{\term};\, \eps;\, \eps \}$. Instantiate the unclosed block case of realization $\realizes$ with $\pterm = \hole$ completing to $\term \coloneq \eps$ and $\overline{\term}_1 \coloneq \eps$. This gives exactly $\{^{\lft}\, \overline{\term};\, \hole \realizes \{^{\lft}\, \overline{\term};\, \eps;\, \eps \}$, i.e., $\pterm \realizes \sealorfr(\pterm)$.
\end{proof}

\begin{theorem}[$\sealorfr$ Reflects Well-Typedness onto Realizations at Statement Boundaries]\label{thm:selective-soundness}
For all $\pterm \in \pclass_{\textsc{stmt}}$, typing environments $\varctx, \varctx^\star$, types $\typ^\star$, lifetimes $\lft$, and store typings $\storectx$:
\[
\varctx \vdash \langle \sealorfr(\pterm) : \typ^\star \rangle^{\lft}_{\storectx} \dashv \varctx^\star \implies \exists \term, \typ, \varctx'.\ \pterm \realizes \term\, \wedge\, \varctx \vdash \langle \term : \typ \rangle^{\lft}_{\storectx} \dashv \varctx'.
\]
\end{theorem}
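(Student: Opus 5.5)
The proof will be a direct witness argument built on \cref{lem:stmt-self-witness}, with no induction. Fix $\pterm \in \pclass_{\textsc{stmt}}$ and suppose $\varctx \vdash \langle \sealorfr(\pterm) : \typ^\star \rangle^{\lft}_{\storectx} \dashv \varctx^\star$. I will choose the witness $\term \coloneq \sealorfr(\pterm)$, $\typ \coloneq \typ^\star$, and $\varctx' \coloneq \varctx^\star$. The typing conjunct of the conclusion is then exactly the hypothesis. The realization conjunct $\pterm \realizes \sealorfr(\pterm)$ is precisely \cref{lem:stmt-self-witness}. Both conjuncts therefore hold, and the existential is discharged.

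The key observation is that at a statement boundary the sealor only performs operations that a legitimate continuation could also perform. Write $\pterm = \{^{m}\, \overline{\term};\, \hole$. Its sealed form $\{^{m}\, \overline{\term};\, \eps;\, \eps\}$ keeps every generated full term verbatim, and it reuses the lifetime label $m$ that was already fixed when the opening brace was generated. It then appends only the text $\eps;\, \eps\}$. This text is a syntactic extension at the frontier: the hole is filled with $\eps$ and the block is closed with a trailing $\eps$. Such an extension is exactly what the unclosed-block clause of realization (\cref{fig:realizes}) permits. So the sealed program is itself one of the possible realizations, and no typing argument is required.

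The only step that needs care is verifying \cref{lem:stmt-self-witness} against the precise shape of the unclosed-block realization clause. That clause must allow an arbitrary (here nonempty) sequence of extra full terms after the realized tail, and the lifetime label must be preserved. If the clause were stated with the tail and the trailing terms merged differently, I would re-associate the sequence $\overline{\term};\, \eps;\, \eps$ accordingly. I expect this to be routine.

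One caveat concerns how the theorem is used, not its proof. To lift the result to sealor soundness on $\pclass_{\textsc{stmt}}$ in the sense of \cref{def:sealor-sound-complete}, I would specialize to $\varctx = \emptyset$, $\storectx = \emptyset$, and $\lft = \lft_{\mathrm{root}}$. I would also note that the textual extension $\eps;\,\eps\}$ is a string $w$ with $s \concat w$ parsing to $\sealorfr(\pterm)$, which is the textual counterpart of the realization fact. Combined with \cref{thm:cons-prefix}, this gives soundness of $\Pcompilefr$ on statement boundaries.
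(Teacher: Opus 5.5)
Your proof is correct and matches the paper's own argument. Like the paper, you take $\sealorfr(\pterm)$ itself as the witness, obtain $\pterm \realizes \sealorfr(\pterm)$ from \cref{lem:stmt-self-witness}, and read the typing conjunct directly off the hypothesis with $\typ = \typ^\star$ and $\varctx' = \varctx^\star$.
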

\begin{proof}
By \cref{lem:stmt-self-witness}, $\pterm \realizes \sealorfr(\pterm)$. By the hypothesis, $\sealorfr(\pterm)$ is a well-typed completion.
\end{proof}

As with \cref{thm:seal-preservation}, \cref{thm:selective-soundness} is stronger than the top-level result we need.
Specializing it yields selective soundness at top level.

\begin{corollary}[$\sealorfr$ is Sound for Partial Syntax at Statement Boundaries]\label{cor:stmt-partial}
For every $\pterm \in \pclass_{\textsc{stmt}}$, $\sealorfr(\pterm) \in \frlang \implies \exists \term.\ \pterm \realizes \term\, \wedge\, \term \in \frlang$.
\end{corollary}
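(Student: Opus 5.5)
The corollary is the top-level specialization of \cref{thm:selective-soundness}, in the same way that \cref{cor:partial} specializes \cref{thm:seal-preservation}. The plan is to unfold the definition of $\frlang$ on both sides and instantiate the theorem with empty contexts at the root lifetime.

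First, fix $\pterm \in \pclass_{\textsc{stmt}}$ and assume $\sealorfr(\pterm) \in \frlang$. By the definition $\frlang := \{\, \term \mid \exists \typ, \varctx.\ \emptyset \vdash \langle \term : \typ \rangle^{\lft_{\mathrm{root}}}_{\emptyset} \dashv \varctx \,\}$, we obtain some $\typ^\star$ and $\varctx^\star$ with $\emptyset \vdash \langle \sealorfr(\pterm) : \typ^\star \rangle^{\lft_{\mathrm{root}}}_{\emptyset} \dashv \varctx^\star$.

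Second, apply \cref{thm:selective-soundness} with $\varctx := \emptyset$, $\storectx := \emptyset$ and $\lft := \lft_{\mathrm{root}}$. This yields $\term$, $\typ$ and $\varctx'$ such that $\pterm \realizes \term$ and $\emptyset \vdash \langle \term : \typ \rangle^{\lft_{\mathrm{root}}}_{\emptyset} \dashv \varctx'$. Folding the definition of $\frlang$ back gives $\term \in \frlang$, which is the claim.

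There is no substantive obstacle. The only point to check is that the lifetime label inside $\pterm$ is unconstrained by the theorem's quantification over $\lft$. The theorem quantifies over arbitrary ambient lifetimes, so instantiating $\lft := \lft_{\mathrm{root}}$ is legitimate. Alternatively, one can bypass the theorem: by \cref{lem:stmt-self-witness}, $\pterm \realizes \sealorfr(\pterm)$, so $\term := \sealorfr(\pterm)$ itself is the required witness and the hypothesis supplies $\term \in \frlang$ directly.
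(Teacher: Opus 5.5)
Your proof is correct and follows the paper's argument: unfold $\frlang$ to obtain a root-lifetime typing of $\sealorfr(\pterm)$ under empty contexts, instantiate \cref{thm:selective-soundness} with $\varctx = \storectx = \emptyset$ and $\lft = \lft_{\mathrm{root}}$, then fold the definition back. Your alternative route is also valid: it uses \cref{lem:stmt-self-witness} to take $\sealorfr(\pterm)$ itself as the witness, which simply inlines the theorem's own one-line proof.
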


\begin{proof}
Fix $\pterm \in \pclass_{\textsc{stmt}}$ with $\sealorfr(\pterm) \in \frlang$. By the definition of $\frlang$, there exist $\typ^\star$ and $\varctx^\star$ such that $\emptyset \vdash \langle \sealorfr(\pterm) : \typ^\star \rangle^{\lft_{\mathrm{root}}}_{\emptyset} \dashv \varctx^\star$.
Applying \cref{thm:selective-soundness} with initial typing environment, store typing $\emptyset$, and ambient lifetime $\lft_{\mathrm{root}}$ yields that $\pterm \realizes \term$ and $\emptyset \vdash \langle \term : \typ \rangle^{\lft_{\mathrm{root}}}_{\emptyset} \dashv \varctx'$ for some $\typ, \varctx'$.
The latter is equivalent to $\term \in \frlang$.
\end{proof}

We lift \cref{cor:stmt-partial} from partial-syntax terms to arbitrary strings.

\begin{theorem}[$\sealorfr$ Is Sound at Statement Boundaries, for Arbitrary Strings]\label{thm:selective-soundness-strings}
The sealor $\sealorfr$ is sound with respect to \FR on $\pclass_{\textsc{stmt}}$, for strings as input, in the sense of \cref{def:sealor-sound-complete}.
\end{theorem}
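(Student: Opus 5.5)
We argue exactly as in the proof of \cref{thm:sealorfr-complete}, but now lift \cref{cor:stmt-partial} instead of \cref{cor:partial}. By \cref{def:sealor-sound-complete}, we must show the following: for every string $s$ in $\pclass_{\textsc{stmt}}$, if $\sealorfr(s) \in \frlang$, then $\exists w.\ s \concat w \in \frlang$.

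The first step is to fix the string-level reading of $\pclass_{\textsc{stmt}}$. A string $s$ belongs to the class precisely when it parses as a partial term $\pterm = \{^{\lft}\, \overline{\term};\, \hole$. In surface syntax, this means $s$ is an opening brace, followed by complete terms each terminated by a semicolon, and nothing after the final semicolon. The lifetime label is an internal annotation fixed by the opening brace, so parsing determines $\pterm$ up to this label. Sealing $s$ is by definition sealing its parse, so $\sealorfr(s)$ is the string form of $\sealorfr(\pterm) = \{^{\lft}\, \overline{\term};\, \eps;\, \eps\}$.

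Now assume $\sealorfr(s) \in \frlang$. By \cref{cor:stmt-partial}, there is a term $\term$ with $\pterm \realizes \term$ and $\term \in \frlang$. Concretely, by \cref{lem:stmt-self-witness} we may take $\term = \sealorfr(\pterm)$ itself. It remains to turn the abstract realization into a string suffix. Realization never revises generated text and only appends at the unfinished frontier, so for $\pterm \realizes \term$ the printed form of $\term$ has the printed form of $\pterm$, namely $s$, as a prefix. With the explicit witness this is immediate: we set $w \coloneq$ the string printing $\eps;\, \eps\,\}$ (in concrete syntax, the unit value, a semicolon, the unit value, and a closing brace). Then $s \concat w$ is exactly the string of $\sealorfr(\pterm)$, which lies in $\frlang$ by assumption.

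The main obstacle is this bookkeeping between strings and abstract syntax, not any typing argument. Three points need care. First, the lifetime label must not cause a mismatch: $\lft$ does not appear in the text, and $\frlang$ is membership up to the label chosen when typing, so any label admitted by \textsc{T-Block} works. Second, the parse of $s \concat w$ must be the intended block rather than some other parse. This follows because the grammar is unambiguous at a statement boundary: the trailing $\hole$ commits to nothing, and appending complete tokens after a semicolon cannot reinterpret the earlier terms. Third, whitespace and the concrete spelling of the unit value $\eps$ must be fixed consistently with the printer used for $\sealorfr$. Once these are settled, the soundness claim follows directly from \cref{cor:stmt-partial}, and it then lifts to the generative compiler $\Pcompilefr$ via \cref{thm:cons-prefix}.
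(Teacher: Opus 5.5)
Your proposal is correct and follows the paper's proof: you parse $s$ into a statement-boundary partial term, apply \cref{cor:stmt-partial}, and turn the resulting realization into a string extension $s \concat w$. Where the paper only asserts that ``this $\term$ is realized by some string $s \concat w$'', you make that step explicit by taking $\term = \sealorfr(\pterm)$ via \cref{lem:stmt-self-witness} and letting $w$ be the text of $\eps;\,\eps\,\}$; this witness works exactly because your reading of $\pclass_{\textsc{stmt}}$ allows nothing after the last semicolon, not even a keyword fragment that the partial syntax would fold into $\hole$.
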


\begin{proof}
Let $s \in \pclass_{\textsc{stmt}}$ satisfy $\sealorfr(s) \in \frlang$. Since $\sealorfr(s)$ is a well-typed \FR term, parsing $s$ yields a partial term $\pterm \in \pclass_{\textsc{stmt}}$, with $\sealorfr(s) = \sealorfr(\pterm)$. By \cref{cor:stmt-partial}, $\exists \term.\ \pterm \realizes \term \wedge \term \in \frlang$. This $\term$ is realized by some string $s \concat w$. Hence $s$ is completable.
\end{proof}

Finally, based on \cref{thm:sealorfr-complete} and \cref{thm:selective-soundness-strings}, $\sealorfr$ is exact at statement boundaries. So is $\Pcompilefr$ according to \cref{thm:cons-prefix}.

%% file: figures/partial-realization-fr.tex
\begin{figure*}[t]
  \centering
  \begin{minipage}[t]{0.44\textwidth}
    \small
    \centering
    \begin{tabular}{@{}l@{\hspace{1mm}}l@{}}
$\pterm ::= $                                    & Partial term \\
\quad $\hole$                                    & \quad Nothing decoded \\
\quad $\term$                                    & \quad Full term \\
\quad $\ppartial{v}$                             & \quad Partial value \\
\quad $\kw{box}\ \pterm$                         & \quad Partial heap allocation \\
\quad $\kw{copy}\ \plval$                        & \quad Partial copy \\
\quad $\plval$                                   & \quad Partial move, or assign LHS \\
\quad $\bshr\,[\kw{mut}]\ \plval$                & \quad Partial (mutable) borrow \\
\quad $\{^{\lft}\, \overline{\term};\, \pterm$   & \quad Unclosed block, partial tail \\
\quad $\kw{let}\ \kw{mut}\ \ppartial{x}$         & \quad Declaration, partial name \\
\quad $\kw{let}\ \kw{mut}\ x = \pterm$           & \quad Declaration, partial init \\
\quad $\lval = \pterm$                           & \quad Assignment, partial RHS \\
    \end{tabular}
    \vspace{-2.5mm}
    \caption{Our partial syntax for \FR{} terms.}
    \label{fig:partial-grammar}
  \end{minipage}
  \hfill
  \begin{minipage}[t]{0.53\textwidth}
    \small
    \centering
    \begin{tabular}{@{}r@{\ }c@{\ }l@{\hspace{2mm}}l@{}}
$\hole$                                    & $\realizes$ & $\term$                                                             & any $\term$ \\
$\ppartial{\val}$                          & $\realizes$ & $\val$                                                              & if $\ppartial{\val}$ is a prefix of $\val$ \\
$\term$                                    & $\realizes$ & $\term$                                                             & unconditionally \\
$\kw{box}\ \pterm$                         & $\realizes$ & $\kw{box}\ \term$                                                   & if $\pterm \realizes \term$ \\
$\kw{copy}\ \plval$                        & $\realizes$ & $\kw{copy}\ \lval$                                                  & if $\plval \realizes \lval$ \\
$\plval$                                   & $\realizes$ & $\lval$                                                             & if $\plval \realizes \lval$ \\
$\plval$                                   & $\realizes$ & $\lval = \term$                                                     & if $\plval \realizes \lval$, any $\term$ \\
$\bshr\,[\kw{mut}]\ \plval$                & $\realizes$ & $\bshr\,[\kw{mut}]\ \lval$                                          & if $\plval \realizes \lval$ \\
$\{^{\lft}\, \overline{\term}_0;\, \pterm$ & $\realizes$ & $\{^{\lft}\, \overline{\term}_0;\, \term;\, \overline{\term}_1 \}$  & if $\pterm \realizes \term$, any $\overline{\term}_1$ \\
$\kw{let}\ \kw{mut}\ \ppartial{x}$         & $\realizes$ & $\kw{let}\ \kw{mut}\ x = \term$                                     & if $\ppartial{x}$ is a prefix of $x$, any $\term$ \\
$\kw{let}\ \kw{mut}\ x = \pterm$           & $\realizes$ & $\kw{let}\ \kw{mut}\ x = \term$                                     & if $\pterm \realizes \term$ \\
$\lval = \pterm$                           & $\realizes$ & $\lval = \term$                                                     & if $\pterm \realizes \term$ \\
    \end{tabular}
    \vspace{-3mm}
    \caption{Our realization relation for \FR{} terms ($\pterm \realizes \term$).}
    \label{fig:realizes}
  \end{minipage}
\end{figure*}

%% file: figures/sealor-fr.tex
\begin{wrapfigure}{r}{0.39\textwidth}
  \vspace{-14mm}
  \small
  \centering
  \begin{tabular}{@{}l@{\hspace{1mm}}l@{}}
$\sealorfr(\pterm) = $                                                        & \\
\quad $\eps$                                                                  & \quad if $\pterm = \hole$ \\
\quad $\term$                                                                 & \quad if $\pterm = \term$ \\
\quad $\eps$                                                                  & \quad if $\pterm = \ppartial{v}$ \\
\quad $\sealorfr(\ptermp)$                                                    & \quad if $\pterm = \kw{box}\ \ptermp$ \\
\quad $\eps$                                                                  & \quad if $\pterm = \kw{copy}\ \plval$ \\
\quad $\eps$                                                                  & \quad if $\pterm = \plval$ \\
\quad $\eps$                                                                  & \quad if $\pterm = \bshr\,[\kw{mut}]\ \plval$ \\
\quad $\{^{\lft}\, \overline{\term};\, \sealorfr(\ptermp);\, \eps \}$         & \quad if $\pterm = \{^{\lft}\, \overline{\term};\, \ptermp$ \\
\quad $\eps$                                                                  & \quad if $\pterm = \kw{let}\ \kw{mut}\ \ppartial{x}$ \\
\quad $\sealorfr(\ptermp)$                                                    & \quad if $\pterm = \kw{let}\ \kw{mut}\ x = \ptermp$ \\
\quad $\sealorfr(\ptermp)$                                                    & \quad if $\pterm = \lval = \ptermp$ \\
  \end{tabular}
  \vspace{-3mm}
  \caption{Our syntax-guided sealor $\sealorfr$.}
  \label{fig:sealor}
  \vspace{-3mm}
\end{wrapfigure}

%% file: sections-new/rust.tex
\providecommand{\todoterm}{\ensuremath{\kw{todo}}}
\providecommand{\panicterm}{\ensuremath{\kw{hole}_\kw{div}()}}
\providecommand{\missingterm}{\ensuremath{\kw{hole}_\kw{val}()}}
\providecommand{\kw}[1]{\ensuremath{\mathtt{#1}}}
\providecommand{\diverges}{\mathbin{\Uparrow}}
\providecommand{\gpar}[1]{\ensuremath{\textrm{<}#1\textrm{>}}}
\providecommand{\prog}{\ensuremath{p}}

\providecommand{\blck}{\ensuremath{b}}
\providecommand{\pblck}{\ensuremath{\uwidehat{b}}}

\newcommand{\sealorrs}{\ensuremath{\sealor_{\text{RS\xspace}}}}
\newcommand{\sealorrse}{\ensuremath{\sealorrs^e}}
\newcommand{\sealorrss}{\ensuremath{\sealorrs^s}}

\section{From \FR to Real Rust}
\label{sec:rust}

We now carry the methodology of \cref{sec:method} from \FR to real Rust by presenting our Rust sealor, $\sealorrs$.

\paragraph{Inherited from \FR}
We design $\sealorrs$ following the same principles as \FR's sealor $\sealorfr$ (\cref{sec:fr-sealor}).
It remains lightweight and syntax-guided in most cases, discards typing obligations when needed for completeness, and adds targeted analysis where doing so improves soundness.
It preserves already-generated structure verbatim and recurses on the still-partial frontier.
As with $\sealorfr$, the goal is global completeness together with selective soundness.

For each Rust feature, we describe its \emph{syntactic shape}, the \emph{partial forms} the sealor may encounter, and the \emph{sealor rules} used to close those partial forms.
For readability, we reuse \FR's notation where applicable and introduce new notation only for constructs specific to Rust.

\paragraph{Not Inherited from \FR}
Two aspects of this section depart from \cref{sec:method}.
First, Rust is much larger than \FR.
The Rust sealor must therefore handle constructs with no \FR counterpart, as well as constructs that resemble \FR but differ in syntactic structure or typing behavior.
We discuss a representative selection of such features across this section.
Second, we give informal, per-feature arguments for typing behavior, completeness, and soundness, rather than the formal, mechanized treatment we gave for the whole \FR.
This is because our goal for this section is to directly capture the behaviors of \rustc{}, the Rust compiler.
Using an existing Rust formalization as an intermediate target would make correctness depend on that formalization's faithfulness to \rustc{}, rather than on \rustc{} directly, which is especially undesirable as Rust continues to evolve.
The global completeness argument follows the same structure as \cref{sec:fr-preservation}: as long as each sealor case is complete with respect to the corresponding \rustc{} behavior, the entire sealor and the generative compiler are complete.

\subsection{General Design Choices for Rust}
\label{sec:rust-distinctions}

We describe design choices recurring across Rust's features, before per-feature treatment in \cref{sec:extension}.

\paragraph{Sealors $\sealorrss$ and $\sealorrse$}
Since Rust distinguishes statements and expressions,
we introduce two sealors:
a statement sealor $\sealorrss$ that seals a partial statement (possibly an expression statement) into a list of statements,
and an expression sealor $\sealorrse$ that seals a partial expression into an expression with an unconstrained type.
When an expression is desired, and its type can be inferred from context, the expression sealor is used.
In all other cases, we use the statement sealor.

\paragraph{Placeholders $\panicterm$ and $\missingterm$}
\FR's sealor closes every abstracted case with the single unit placeholder $\eps$, since \FR's control-flow-free terms leave no other typing obligation to discharge.
\rustc{}'s typing rules for branching and looping constructs additionally require reasoning about multiple control-flow paths, e.g., that both branches of an $\kw{if}$ agree, or that a loop body's effects are consistent across iterations.
To seal such constructs conveniently, we introduce two placeholders with deliberately different typing behavior.

The first placeholder, $\panicterm$, is simply $\kw{panic!}()$ in our implementation.
Statically, $\kw{panic!}()$ has type $!$, the never type, whose expressions can be coerced to any expected type.
It diverges: executing it never returns normally.
Because the continuation after a divergent expression is unreachable, \rustc{} does not require it to satisfy the usual borrow-checking obligations.
This helps preserve completeness; otherwise \rustc{} might report borrow errors caused only by an intermediate partial state, such as a variable being moved out before later generated code would reinitialize it.
This makes $\panicterm$ useful for sealing control flow branches that should not produce a value.

The second, $\missingterm$, instead behaves as if it produces a well-typed value without diverging.
We implement it as a call to a generic helper function:
$\kw{const}\ \kw{fn}\ \missingterm{}\gpar{T}() \rightarrow T \ \{\ \kw{panic!}()\ \}$.
The function body diverges at run time, but this divergence is not exposed at the call site.
The call expression $\missingterm$ has the declared return type $T$, not the body type $!$.
The type parameter $T$ is instead resolved by type inference from the surrounding context.
Thus, $\missingterm$ can instantiate to any expected type in the surrounding code while keeping the borrow checker enabled.

A more subtle reason not to use $\panicterm$ as the expression placeholder is Rust's fallback for the never type.
When type inference is underconstrained, $!$ may fall back to the unit type $()$, causing spurious type or trait mismatch errors.
The type parameter $T$ in $\missingterm$ does not undergo this fallback, making it a safe choice when sealed code must supply an expression value.

\paragraph{Suppressing Future-Dependent Errors During Generation}
\label{sec:expected-errors}
Sealing Rust sometimes requires treating \rustc{} errors more carefully than a simple accept-or-reject verdict.
This is not a peculiarity of our sealor, but a consequence of Rust features whose well-typedness can depend on code generated later.
For example, a partial trait implementation may not yet contain all required methods, a call may refer to a function item declared later in the file, and an expression may have an ambiguous type that is resolved only by a subsequent use.
Such errors are required to be suppressed for completeness.
Otherwise the sealed program could be rejected even though the missing suffix would make the final program well-typed.
To limit the impact on soundness, we suppress errors only in the scope where necessary.
For instance, ``missing trait items'' is suppressed only if it occurs inside the sealing of partial trait implementations and ``type annotations needed'' errors only inside the sealing of partial function bodies.
When we arrive at a fully generated program after the $\EOS$ token, we clean out all suppressed errors, such that \rustc{} could catch all errors.

\paragraph{Projecting Error Messages on Partial Programs}
Since \rustc{} checks the sealed program $\sealorrs(c)$ rather than the original partial program $c$, its diagnostics refer to $\sealorrs(c)$.
Feeding these diagnostics directly to the model can be confusing, because $\sealorrs(c)$ may differ from $c$, as in \cref{fig:bg-partial,fig:bg-sealed}.
We therefore project diagnostics from $\sealorrs(c)$ back to $c$.
During sealing, we maintain a positional map of code copied verbatim from $c$ into $\sealorrs(c)$.
We then re-render \rustc{}'s diagnostics using $c$ as the reference code, mapping the locations of diagnostic highlights from $\sealorrs(c)$ back into $c$.
Inserted code such as $\missingterm{}$ has no source in $c$; diagnostics arising only from such code are suppressed by the module described above.

\subsection{$\sealorrs$: Our Sealor for Rust}
\label{sec:extension}

We now introduce the Rust sealor rules for expression and statement forms.

\paragraph{Blocks}
A block in Rust, denoted $\blck$, consists of a sequence of statements $\overline{s}$, optionally terminated by an expression $e$.
That is, $\blck ::= \{\, \overline{s};\, e^? \}$.
A block is itself an expression, so it may appear wherever an expression is expected.
A block creates a lexical scope for bindings: variables defined inside are not visible outside, and owned values are dropped at the end unless moved earlier.
The type of a block is the type of its tail expression, or the unit type if there isn't one.
Independently of that, a block diverges if any of its statements or tail expression diverges.
A block may end with a partial expression or statement. Both are sealed identically:
\begin{center}
\vspace{1mm}
\begin{tabular}{@{}l@{\hspace{1mm}}l@{\hspace{4mm}}|@{\hspace{4mm}}l@{\hspace{1mm}}l@{}}
$\sealorrss(\pblck) = $                                              &                                        & $\sealorrse(\pblck) = $                                              & \\
\quad $\{\, \overline{s};\, \sealorrss(\ppartial{e});\, \panicterm \,\}$  & \quad if $\pblck = \{\, \overline{s}\ \ppartial{e}$  & \quad $\{\, \overline{s};\, \sealorrss(\ppartial{e});\, \missingterm \,\}$  & \quad if $\pblck = \{\, \overline{s}\ \ppartial{e}$ \\
\quad $\{\, \overline{s};\, \sealorrss(\ppartial{s});\, \panicterm \,\}$  & \quad if $\pblck = \{\, \overline{s}\ \ppartial{s}$  & \quad $\{\, \overline{s};\, \sealorrss(\ppartial{s});\, \missingterm \,\}$  & \quad if $\pblck = \{\, \overline{s}\ \ppartial{s}$
\end{tabular}
\vspace{1mm}
\end{center}
The block sealors preserve the full statements $\overline{s}$ and seal the partial component with $\sealorrss$.
The statement sealor then appends $\panicterm$, making the sealed block diverge to stop the control-flow path (more on this later in this section when we discuss conditionals and loops).
The expression sealor instead appends $\missingterm$, so the block produces a value of the type expected by its context.

Completeness follows from two facts.
First, preserving the completed statements $\overline{s}$ and recursively sealing the partial component ($\sealorrss(\ppartial{s})$) mirrors \FR's completeness argument for unclosed blocks (\cref{sec:fr-preservation}).
That is, each piece is checked in exactly the environment it would occupy in any well-typed completion.
Second, the trailing placeholder never introduces a new failure: $\panicterm$ and $\missingterm$ type check against any expected type, so appending them cannot make an otherwise completable block fail to type check.
For soundness, we preserve all of \rustc{}'s checking on $\overline{s}$ and on the recursively sealed partial form, so any uncompletable prefix that fails these checks is rejected.

\paragraph{Fallback Sealors}
For every feature discussed next, we present a bespoke statement sealor $\sealorrss$.
We do not need a separate expression sealor $\sealorrse$, since we automatically derive one from $\sealorrss$ via $\sealorrse(\ppartial{e}) \coloneq \{\, \sealorrss(\ppartial{e});\, \missingterm\, \}$.
This turns any sealing of $\ppartial{e}$ from a list of statements into an expression that produces a value of the expected type.
Both completeness and soundness of this derivation are inherited directly from $\sealorrss$: appending $\missingterm$ neither introduces nor discards any typing obligation, since $\missingterm$ type checks against any expected type.

When a construct has no specific rule for $\sealorrss(\ppartial{s})$, we fall back to $\sealorrss(\ppartial{s}) \coloneq ()$.
This discards the partial statement, preserving completeness by removing its typing obligations.
Soundness may be lost temporarily, but is recovered once the statement is fully generated and copied verbatim.

\paragraph{Conditionals}
\label{sec:ext-cond}
Conditionals in Rust are expressions: the condition is an expression, and the branches are blocks; chaining multiple conditions with \kw{else\ if} is allowed.
That is, $e_\kw{if} ::= \kw{if}\ e\ \blck \sep{} \kw{if}\ e\ \blck\ \kw{else}\ \blck \sep{} \kw{if}\ e\ \blck\ \kw{else}\ e_\kw{if}$.
\rustc{} type checks the condition against $\kw{bool}$, then type checks each branch and merges the two states after them.
When an \kw{else\ if} chain is nested, the two-way merge applies recursively from the innermost pair outward.
A branch that diverges contributes nothing to the merge: its result type and borrow state at the join point are unobservable, so the result type and the borrow state after the $\kw{if}$ are entirely determined by the live branch.

A partial component may appear inside the condition, inside the then-branch, or inside the else-branch.
That is, $\ppartial{e_\kw{if}} ::= \kw{if}\ \ppartial{e} \sep{} \kw{if}\ e\ \pblck \sep{} \kw{if}\ e\ \blck\ \kw{else}\ \pblck \sep{} \kw{if}\ e\ \blck\ \kw{else}\ \ppartial{e_\kw{if}}$.
We close the missing branch with $\panicterm$, so that it is diverging and the whole $\kw{if}$ is well-typed against the live side alone.
The statement sealor $\sealorrss$ for $\ppartial{e_\kw{if}}$ is defined below:
\begin{center}
\vspace{1mm}
\begin{tabular}{@{}r@{\hspace{1mm}}l@{\hspace{1mm}}l@{}}
$\sealorrss(\ppartial{e_\kw{if}}) = $ & $\sealorrss(\ppartial{e})$                                                            & \quad if $\ppartial{e_\kw{if}} = \kw{if}\ \ppartial{e}$ \\
                                      & $\kw{if}\ e_0\ \sealorrss(\pblck)\ \kw{else}\ \{\, \panicterm \,\}$                   & \quad if $\ppartial{e_\kw{if}} = \kw{if}\ e_0\ \pblck$ \\
                                      & $\kw{if}\ e_0\ \blck_1\ \kw{else}\ \sealorrss(\pblck)$                                & \quad if $\ppartial{e_\kw{if}} = \kw{if}\ e_0\ \blck_1\ \kw{else}\ \pblck$ \\
                                      & $\kw{if}\ e_0\ \blck_1\ \kw{else}\ \{\, \sealorrss(\ppartial{e});\, \panicterm \,\}$  & \quad if $\ppartial{e_\kw{if}} = \kw{if}\ e_0\ \blck_1\ \kw{else\ if}\ \ppartial{e}$ \\
                                      & $\kw{if}\ e_0\ \blck_1\ \kw{else}\ \sealorrss(\kw{if}\ e_1 \ldots)$                   & \quad if $\ppartial{e_\kw{if}} = \kw{if}\ e_0\ \blck_1\ \kw{else\ if}\ e_1 \ldots$
\end{tabular}
\vspace{1mm}
\end{center}
A partial $\kw{else\ if}$ condition is sealed as an $\kw{else}$ body, closed with $\panicterm$ just like the missing branch itself.
Once the condition becomes full, we instead recurse via $\sealorrss$ on the remaining $\kw{if}\ e_1 \ldots$ as a fresh partial $\kw{if}$.

Our sealor is complete for conditionals.
Every partial component we recurse on inherits completeness from the inductive hypothesis on branch blocks and boolean conditions.
Although we do not separately discuss boolean conditions in detail, the completeness of sealing them should be easy to achieve.
The only other addition, $\panicterm$, never introduces new obligations: since it diverges, \rustc{} excludes it from the branch-type merge, so it cannot constrain what the live side must satisfy.
For soundness, we preserve all of Rust's checking up to the partial form, so any uncompletable prefix that fails one of these checks is still rejected.

\paragraph{While Loops}
A Rust \kw{while} loop consists of a condition and a body block.
That is, $e_\kw{while} ::= \kw{while}\ e\ \blck$.
A well-typed \kw{while} loop requires that the loop body be well-typed \emph{as if it were executed any number of times}.
Informally, the borrow and initialization state after one iteration must be compatible with the state at the start of the next.
Moving out of a variable in a loop body without reinitialization is rejected, as the second iteration would read a moved-out slot.
Rust waives this obligation when the body diverges, because control never reaches a second iteration.

Either the condition or the body may be partial, i.e., $\ppartial{e_\kw{while}} ::= \kw{while}\ \ppartial{e} \sep{} \kw{while}\ e\ \pblck$.
We force a partial body to diverge by appending $\panicterm$ via $\sealorrss(\pblck)$, eliminating the ``carries over between iterations'' obligation while still checking everything on the already-generated first iteration.
\begin{center}
\vspace{1mm}
\begin{tabular}{@{}r@{\hspace{1mm}}l@{\hspace{1mm}}l@{}}
$\sealorrss(\ppartial{e_\kw{while}}) = $ & $\sealorrss(\ppartial{e})$            & \quad if $\ppartial{e_\kw{while}} = \kw{while}\ \ppartial{e}$ \\
                                         & $\kw{while}\ e\ \sealorrss(\pblck)$   & \quad if $\ppartial{e_\kw{while}} = \kw{while}\ e\ \pblck$
\end{tabular}
\vspace{1mm}
\end{center}
The other loop forms in Rust (\kw{for}, unbounded \kw{loop}) are handled analogously.
For completeness, the condition case is similar to $\kw{if}$.
For the body case, the sealed body diverges by construction, so \rustc{} does not enforce the second-iteration consistency check, and the first-iteration prefix (the condition and the completed body prefix $\overline{s}$) is checked against exactly the same environment it would occupy in any well-typed completion.
For soundness, we do not lose anything more than what is lost in the partial term in the loop body, as the realization of $\pblck$ can also diverge.

\paragraph{Function Calls}
\label{sec:ext-call}
A function call consists of a callee and zero or more arguments.
That is, $e_\kw{call} ::= e(\overline{e})$.
At a call site $e_0(e_1, \ldots, e_n)$, \rustc{} resolves $e_0$ to a signature, then type checks the arguments left-to-right, threading each argument's move/borrow effects into the environment used to check the next.
The number of arguments must equal the declared arity.

A partial call has zero or more completed arguments followed by one partial argument.
That is, $\ppartial{e_\kw{call}} ::= e(\overline{e}, \ppartial{e}$.
If the callee is still partial, sealing reduces to handling whatever partial expression the callee is via its own rule, so we do not specifically discuss this case here.
Our sealor first queries \rustc{} as an \emph{arity oracle}: we invoke it on a sealing where $e_0$ is called with no arguments and inspect its type.
Once the arity $n$ is known, we pad with $\missingterm$ for the missing arguments: via $\sealorrse$ for the partial argument, and directly for the ones not at all present in $\ppartial{e_\kw{call}}$.
\begin{center}
\vspace{1mm}
\begin{tabular}{@{}l@{\hspace{1mm}}l@{}}
$\sealorrss(\ppartial{e_\kw{call}}) = e_0(e_1, \ldots, e_k, \sealorrse(\ppartial{e}), \underbrace{\missingterm, \ldots, \missingterm}_{n - k - 1 \text{ copies}})$ & \quad if $\ppartial{e_\kw{call}} = e_0(e_1, \ldots, e_k, \ppartial{e}$ \\
\end{tabular}
\vspace{1mm}
\end{center}

For completeness, our sealor places each completed argument $e_i$ in exactly the same environment it would occupy in any well-typed completion (arguments before it are the same; then the partial argument is sealed, and arguments after it are $\missingterm$, which is generic and produces no effect on the environment).
This covers the case $k < n$, i.e., fewer arguments have been generated so far than the callee accepts.
If instead $k \geq n$, the prefix already has more arguments than the callee's arity, our sealor still copies all of them rather than truncating, so \rustc{} rejects the sealed call with a genuine arity mismatch.
This is not a false rejection, since no well-typed completion can exist once too many arguments have already been committed.
For soundness, we preserve \rustc{}'s checking of the callee $e_0$ and the completed arguments $e_1, \ldots, e_k$ exactly as it would occur in the real program, so any type or borrow error among them is still caught.

\paragraph{Field Accesses}
\label{sec:ext-field}
A field access has the form $e.f$, where $e$ is an expression of record (or reference-to-record) type and $f$ is an identifier naming one of its fields.
That is, $e_\kw{field} ::= e.f$.
Field accesses are lvals, so they participate in move, copy, borrow, and assignment on the same footing as variables.

The partial syntax rule is $\ppartial{e_\kw{field}} ::= e.\ppartial{f}$.
In other words, the receiver is completed but the field has been cut off mid-token.
For $e.\ppartial{f}$, we only copy the receiver.
We take a reference to avoid moving out of $e$.
This is necessary because the completion may be a method invocation (either on $e$ itself or on a field) that takes $\kw{self}$ by reference.
That is, $\sealorrss(e.\uwidehat{f}) = \&e$.

For completeness, the receiver of any partial field access is a strict sub-prefix of every completion, so the receiver is well-typed if the completion is well-typed.
In terms of borrow checking, we need to suppress \texttt{E0382} (borrow of moved value) because fields of the receiver (other than the one accessed) could be uninitialized.
Suppressing \texttt{E0382} means we do not check if the field (or the whole record) is initialized until later.
For soundness, we preserve the type and borrow-checking of the receiver in all cases.
We do not attempt to check whether truncated identifiers can be extended to a real field of the receiver's type, as this goes beyond our syntax-guided design.
Instead, we wait until the field becomes full.

\paragraph{Other Expressions and Top-Level Items}
We briefly summarize the remaining Rust constructs. Aggregate literals, including tuples $(\overline{e})$, arrays $[\overline{e}]$, and record expressions $S\{\overline{f:e}\}$, are sealed by recursing on the trailing partial element with $\sealorrse$ and then closing the aggregate. For record expressions, we suppress \texttt{E0063} (missing field initializers). This preserves completeness because already generated elements are checked in order, while missing fields may still be supplied later; soundness is preserved for the fully generated elements.

Closure expressions are sealed by sealing the body as an expression, using $\missingterm$ to close value-producing positions. Closures with partial signatures are skipped until the signature is complete, which delays checking of parameter types. Other expression forms, including unary expressions, \kw{return}, condition-local \kw{let}, parenthesized expressions, \kw{try}, ranges, indexing, and casts, are handled uniformly: the sealor preserves fully generated components and recurses into the partial one. Constructs without a bespoke expression rule use the fallback sealor described above.

Top-level items are handled similarly. Function bodies are sealed like closure bodies. Item-containing constructs, such as module definitions and \texttt{impl} blocks, preserve the enclosing definition and all fully generated sub-items, then recursively seal the partial item at the frontier. For constructs without bespoke rules, including type definitions, type aliases, constants, statics, extern blocks, global \texttt{asm!}, and macros, fully generated instances are preserved verbatim, while partial instances are discarded until enough syntax has been generated for a more specific rule to apply.

\paragraph{Example Revisited}
Three rules seal the partial program in \cref{fig:bg-partial} into the full program in \cref{fig:bg-sealed}.
First, the function body is an unclosed block ending in the partial statement $\kw{let}\ \kw{prio} = \kw{tag}.\ppartial{f}$, so the block rule appends $\missingterm$: $\sealorrse(\pblck) = \{\, \overline{s}; \sealorrss(\ppartial{s});\, \missingterm \,\}$.
Second, like the \FR sealor (\cref{sec:fr-sealor}), the $\kw{let}$ rule drops the left-hand side and recurses on the partial right-hand side: $\sealorrss(\kw{let}\ \kw{prio} = \kw{tag}.\ppartial{f}) = \sealorrss(\kw{tag}.\ppartial{f})$.
Finally, the field-access rule closes the partial access as $\&\kw{tag}$.

%% file: sections-new/eval.tex
\section{Experimental Evaluation}
\label{sec:eval}

In this section, we experimentally evaluate our generative compilation approach.
We focus on two core research questions: (R1) Can our generative compilation effectively support code generation? and (R2) What are observable effects of generative compilation?

\subsection{Experimental Setup}

\paragraph{Models and Agent Harness}
We evaluate seven recent coding models.
The three frontier black-box models, \mopus{}~\citep{AnthropicClaudeOpus48SystemCard}, \mgptcodex{}~\citep{OpenAI2026Gpt53CodexSystemCard}, and \mgemini{}~\citep{GoogleDeepMind2026Gemini35FlashModelCard}, come from major model providers and are explicitly advertised for coding.
We also include four strong open-weight models, Kimi K2.7 Code \citep{kimiK27Code}, \mglm{} \citep{glm}, and two models from the \mqwen{} family, a large 397B-parameter model and a smaller 9B-parameter model~\citep{qwen35blog}.
This selection spans a broad range of model categories and capabilities.
We evaluate each model twice on every task, sampling with temperature $0.6$ whenever the provider permits configuring it.

We use an agent harness with fixed control flow, similar in style to Agentless~\citep{10.1145/3715754}, to isolate the effect of feedback and enable a clean comparison.
The model generates files or structured edits, while the harness deterministically writes them to disk, runs fixed compilation commands, and feeds any diagnostics back into the next prompt.

\paragraph{Tasks and Benchmarks}
We evaluate two repository-level tasks for which Rust compiler errors occur frequently.
Both tasks require filling a Rust skeleton that defines struct and function signatures.
We report two end-to-end metrics: \emph{compiler error rate} and \emph{functional correctness}.
Compiler error rate is the fraction of generated outputs for which the Rust compiler reports at least one error diagnostic.
Functional correctness is the fraction of generations that pass the task's unit tests.
To better understand the results, we also measure runtime, the number of diagnostics in each compiler-feedback report, whether errors are detected during generation, and how far detected errors are from their original locations.
We list prompts for both tasks in \cref{app:prompts}.

The first task, denoted \dtranslation{}, is based on CRUST-Bench~\citep{khatry2025crustbench}, a benchmark for C-to-Rust translation.
\dtranslation{} is based on a subset of 20 complex instances from CRUST-Bench.
The model is given the C implementation of a library and is tasked to translate it into Rust.
We ask the model to translate each C file at a time, such that it uses the specified interfaces.
C to Rust translation is an important topic, as Rust is considered a more memory-safe and similarly performant alternative to C.
At the same time, large translations are challenging: the model must translate C data structures consistently and adapt them to Rust interfaces across files.

The second task, denoted \dupgrade{}, evaluates whether models can adapt to changed Rust library APIs.
Each instance asks the model to implement a small command-line utility that requires using a recently updated API function.
This setting is relevant because LLMs always have fixed training knowledge cutoffs and may therefore generate code against deprecated APIs.
The benchmark tests whether the model can use compiler feedback about the changed API to repair its generation.
We provide details about benchmark construction in \cref{app:exp-details}.

\paragraph{Comparison}
We compare three approaches.
The first, \vanilla{}, samples directly from the LLM without compiler feedback.
The second, denoted \rfinals{} for short, adds standard post-generation compiler feedback: after each generated solution, the compiler is run, and any diagnostics are fed back to the model.
If compilation fails, the model may either edit its previous output or regenerate a new solution from scratch.
This process may repeat up to $n$ times.
The third approach, denoted \rimms{}, further incorporates on-the-fly feedback enabled by generative compilation (\cref{fig:gc-threads}).
This tests whether generative compilation can improve over post-generation feedback by detecting errors early.
During generation, generative compilation may detect compiler errors on partial outputs and restart generation, up to $k$ times.
If the model has not produced a valid output after these $k$ restarts, we allow it to finish generating a full output and then apply up to $n-k$ iterations of post-generation compiler feedback.
In our experiments, we set $n=15$ in \dupgrade{} and $n=20$ on \dtranslation{}, and $k=10$ in both settings.
We ablate over the budget choices in \cref{app:budget-ablation}.

Note that we cannot experimentally compare against constrained decoding because, to our knowledge, no constrained decoding implementation exists for Rust.
Although our approach can check partial programs, it is not suitable for constrained decoding.
This is because our approach guarantees global completeness, whereas constrained decoding requires global soundness, as discussed at the end of \cref{sec:defs}.
Building a constrained decoder from scratch for real Rust would be highly complex and costly, as discussed in \cref{sec:motivation}.
This contrast highlights a key benefit of generative compilation: it is lightweight to implement and largely reuses existing compiler infrastructure.



\paragraph{Implementation Details} Our sealor is implemented in about 5k lines of Rust, which is orders of magnitude less than the $\sim$600k lines in Rust's compiler frontend \citep{rustc}.
It parses partial Rust code using \textsf{rust-analyzer} \citep{rust-analyzer}, synthesizes sealed programs using the rules outlined in \cref{sec:rust}, and queries \rustc{} for compiler feedback. We apply minimal patches to rust-analyzer to improve its handling of partial ASTs.
The LLM-facing layer comprises about 3k lines of Python: it streams candidate tokens, invokes the sealor and the Rust compiler through bindings, and implements the logic in \cref{fig:gc-threads}.
For robustness, the implementation is covered by 329 Rust and Python tests spanning the sealor and inference behavior. Finally, we include wrappers for different LLM inference APIs, evaluated datasets, and inference and evaluation orchestration comprising another 8k lines of Python code.

\input{figures/table_downstream.tex}

\subsection{Benefits for End-to-End Code Generation}
\label{sec:eval:downstream}

We integrate our method into the code generation pipeline and measure its benefits, answering R1.

\paragraph{Reduced Compiler Errors and Improved Functional Correctness}
As shown in \cref{tab:downstream}, \vanilla{} produces non-compilable outputs in up to $90.0\%$ of cases.
Post-generation compiler feedback (\rfinals{}) reduces the average compiler error rate from $65.9\%$ to $20.7\%$, highlighting the strong repair capabilities of state-of-the-art LLMs.
Even so, adding on-the-fly feedback with \rimms{} reduces the error rate further to $13.1\%$.
Across all 14 model-dataset configurations, \rimms{} achieves a better compiler error rate than \rfinals{} and one tie;
in 9 of the configurations, the improvement over \rfinals{} is statistically significant.
Notably, \rimms{} eliminates compiler errors entirely for \mopusshort{} on \dupgrade{} ($0.0\%$ vs.\ $6.7\%$ with \rfinals{}) and reduces them from $13.3\%$ to $3.3\%$ for \mgptcodex{}.

Reductions in compiler errors also transfer to improving functional correctness.
\rimms{} attains the best functional correctness in 11 of the 14 model-dataset configurations.
The largest improvements are visible on \dupgrade{} for \mglm{} (from $53.3\%$ with \rfinals{} to $71.7\%$ with \rimms{}) and on \dtranslation{} for \mkimi (from $39.9\%$ to $53.9\%$, respectively).

\paragraph{Decreased Runtime}
Despite intervening during generation, \rimms{} decreases overall runtime. The average overhead over \vanilla{} drops from $233$ seconds ($+283\%$) with \rfinals{} to $135$ seconds ($+170\%$) with \rimms{}. This reduction is driven by models that require many feedback rounds, namely \mgemini{} and the \mqwen{} models. For \mqwensmall{}, \rimms{} more than halves the average runtime on both datasets; for example, on \dtranslation{}, runtime drops from on average $879$ to $357$ seconds per sample.
By interrupting generation at the first unrecoverable error, \rimms{} avoids repeatedly completing files that cannot compile. For \mopusshort{}, and \mkimi{}, which require on-the-fly feedback less frequently, runtime remains essentially unchanged, increasing by $2$ seconds ($+6\%$) for former, and decreasing $11$ seconds ($-6\%$) for the latter.

\input{figures/figure_error_analysis.tex}

\subsection{Analysis: What Effects Does Generative Compilation Have?}
\label{sec:eval:analysis}

\paragraph{Mitigated Error Snowballs}
Looking more closely at \rimms{}, we find that most errors are resolved during the early-feedback phase, i.e., within the first $k=10$ restarts.
Overall, $85.3\%$ of tasks complete without ever switching back to post-generation compiler feedback, and $55.4\%$ of all tasks are solved correctly within the early-feedback phase.
Weaker models require more restart: \mqwensmall{} takes more than 10 restarts in $40.9\%$ of tasks and \mqwenlarge{} on $20.5\%$, compared to only $2.1\%$ for \mgptcodex{}.
We attribute this partly to weaker models unfamiliarity with the early compiler-feedback format; in particular, they often regenerate the same erroneous prefix.

Generative compilation also makes error reports smaller and more focused.
When a completed file is compiled, one core mistake can cascade into many follow-on diagnostics.
By interrupting generation at the first unrecoverable mistake, \rimms{} instead surfaces a reduced set of diagnostics that focuses on the root cause.
As shown in \cref{fig:error-size}, $65\%$ of \rimms{} error reports contain only one or two distinct diagnostics, with an average of $5.5$ diagnostics per message. In contrast, only $40\%$ of \rfinals{} reports are this succinct, averaging $13.8$ diagnostics and, on \dtranslation{}, sometimes reaching several hundred.
This reduction gives the model a smaller and more targeted repair signal, reducing the burden of locating the relevant error within its context.

\paragraph{Reporting Errors Close to Their Source}
To further assess the quality of \gc{}'s error detection, we measure how early it surfaces errors during generation.
For each of the $961$ erroneous files generated by \vanilla{} and included in this analysis, we record the location of the first compiler error returned by \rustc{} in the completed program.
We then replay the same file prefix by prefix against \gc{} and record where \gc{} first reports an error.
The replay feeds each file verbatim at 50 tokens per second, a typical decoding speed for current frontier coding models\footnote{As reported by OpenRouter for \mopus{} (\url{https://openrouter.ai/anthropic/claude-opus-4.8}) and \mgptcodex{} (\url{https://openrouter.ai/openai/gpt-5.3-codex}).}, while running \gc{} in parallel without asking the model to repair errors.

\textsc{Oracle} denotes the primary span of the first compiler error reported on the completed file.
This span is a lower bound on the earliest possible detection point for that error, but is generally not achievable from a prefix alone: a prefix ending at the span may still admit a valid completion, for example if the error refers to an item that could be defined later.
We compare this oracle span with those reported by \gc{}.
We also compare against two baselines: \rimmfn{}, a simpler variant that checks only completed function bodies, and \pc{}, which reports compiler feedback only after the full file has been generated.

In the median case, \gc{} reports the error only $3$ lines after the primary span of the eventual compiler error, and in a quarter of all cases it reports the error exactly at that span.
By contrast, \rimmfn{} and \pc{} report errors much later, with median delays of $14$ and $89$ lines after \textsc{Oracle}, respectively, as shown in \cref{fig:error-delay}.
The delay distribution for \gc{} is heavy-tailed, with a mean of $27.7$ lines.
This is expected: our configuration deliberately tolerates references to not-yet-defined items, as they may legitimately be introduced later in the file.
Errors involving such items are therefore reported only once such a later definition is ruled out, e.g., when the model signals the end of file generation.
Indeed, among errors in the top quartile of detection delay ($\geq 25$ lines; mean $99$ lines), $90.5\%$ involve such items: unresolved names and imports ($63.8\%$) or missing methods and functions ($26.7\%$).

\paragraph{Errors are Detected Long before File Completion}
Reporting errors close to their source is most useful when those sources occur early in generation, since it prevents the model from producing later code based on wrong assumptions.
As shown in \cref{fig:error-cdf}, \gc{} reports unrecoverable errors at a mean detection point of $33.3\%$ of the file, nearly matching the timing-free upper bound of $32.7\%$.
Thus, \gc{} avoids generating the remaining $66.7\%$ of unrecoverably non-compiling output.
By contrast, \pc{} reports errors only after file completion by construction, while \rimmfn{} lags behind with a mean detection point of $40.3\%$.

\paragraph{Detected Error Kinds}
We categorize errors detected during mid-file restarts into syntax errors, type errors, borrow-check and lifetime errors, and others. Most errors are type-system related; the most frequent is type mismatch (E0308), which accounts for over a third of reports. Our method also detects more intricate borrow-check and lifetime violations early, including conflicting borrows (E0502) and moves out of borrowed values (E0507). A detailed breakdown appears in \cref{app:error-kinds}.

%% file: figures/table_downstream.tex
\begin{table}[t]
\centering
\caption{Comparison of \vanilla{}, \rfinals{}, and \rimms{} on two repository-level coding tasks (\dtranslation{} and \dupgrade{}). Here, \vanilla{} denotes generation without compiler feedback, \rfinals{} adds standard post-generation feedback, and \rimms{} further incorporates on-the-fly feedback enabled by generative compilation. Overall, \rimms{} further reduces compiler errors while improving functional correctness. We mark the best result in \textbf{boldface} and \underline{underline} results that are not significantly worse, determined by a paired difference test at significance level $\alpha=5\%$. }
\vspace{-1em}
\label{tab:downstream}
\setlength{\tabcolsep}{4pt}
\resizebox{\linewidth}{!}{%
\begin{tabular}{@{}lcccccc cccccc@{}}
\toprule
             & \multicolumn{6}{c}{\dtranslation{}} & \multicolumn{6}{c}{\dupgrade{}} \\
\cmidrule(lr){2-7}\cmidrule(l){8-13}
             & \multicolumn{3}{c}{Compiler Errors ($\downarrow$)} & \multicolumn{3}{c}{Functional Corr. ($\uparrow$)} & \multicolumn{3}{c}{Compiler Errors ($\downarrow$)} & \multicolumn{3}{c}{Functional Corr. ($\uparrow$)} \\
\cmidrule(lr){2-4}\cmidrule(lr){5-7}\cmidrule(lr){8-10}\cmidrule(l){11-13}
Model        & \vanillas{}   & \rfinals{}    & \rimms{}          & \vanillas{}   & \rfinals{}           & \rimms{}          & \vanillas{}   & \rfinals{}           & \rimms{}          & \vanillas{}   & \rfinals{}           & \rimms{}          \\
\midrule
\mopusshort & $51.8$\% & $7.5$\% & $\mathbf{2.6}$\% & $32.0$\% & $\underline{61.0}$\% & $\mathbf{62.3}$\% & $51.7$\% & $6.7$\% & $\mathbf{0.0}$\% & $43.3$\% & $\underline{85.0}$\% & $\mathbf{86.7}$\% \\
\mgpt & $46.1$\% & $3.9$\% & $\mathbf{1.8}$\% & $35.5$\% & $\underline{61.0}$\% & $\mathbf{61.8}$\% & $65.0$\% & $13.3$\% & $\mathbf{3.3}$\% & $35.0$\% & $75.0$\% & $\mathbf{80.0}$\% \\
\mgeminishort & $61.4$\% & $14.0$\% & $\mathbf{10.5}$\% & $23.7$\% & $\underline{50.4}$\% & $\mathbf{52.2}$\% & $68.3$\% & $\underline{16.7}$\% & $\mathbf{13.3}$\% & $26.7$\% & $\underline{78.3}$\% & $\mathbf{80.0}$\% \\
\midrule
\mkimi & $65.8$\% & $38.6$\% & $\mathbf{11.0}$\% & $23.7$\% & $39.9$\% & $\mathbf{53.9}$\% & $58.3$\% & $13.3$\% & $\mathbf{3.3}$\% & $36.7$\% & $\underline{76.7}$\% & $\mathbf{83.3}$\% \\
GLM 5.2 & $60.5$\% & $\underline{17.1}$\% & $\mathbf{15.4}$\% & $25.4$\% & $\mathbf{51.3}$\% & $\underline{50.4}$\% & $80.0$\% & $45.0$\% & $\mathbf{16.7}$\% & $20.0$\% & $53.3$\% & $\mathbf{71.7}$\% \\
\mqwenlarge & $64.5$\% & $\underline{13.2}$\% & $\mathbf{12.7}$\% & $23.2$\% & $\mathbf{54.8}$\% & $\underline{53.1}$\% & $73.3$\% & $21.7$\% & $\mathbf{15.0}$\% & $26.7$\% & $\underline{68.3}$\% & $\mathbf{71.7}$\% \\
\mqwensmall & $85.5$\% & $\underline{36.0}$\% & $\mathbf{35.1}$\% & $8.3$\% & $\underline{30.3}$\% & $\mathbf{33.3}$\% & $90.0$\% & $\mathbf{43.3}$\% & $\mathbf{43.3}$\% & $10.0$\% & $\mathbf{48.3}$\% & $\underline{41.7}$\% \\

\bottomrule
\end{tabular}%
}
\end{table}

%% file: figures/figure_error_analysis.tex
\begin{figure}[t]
\centering
\begin{subfigure}{0.32\linewidth}
\centering
\includegraphics[width=\linewidth]{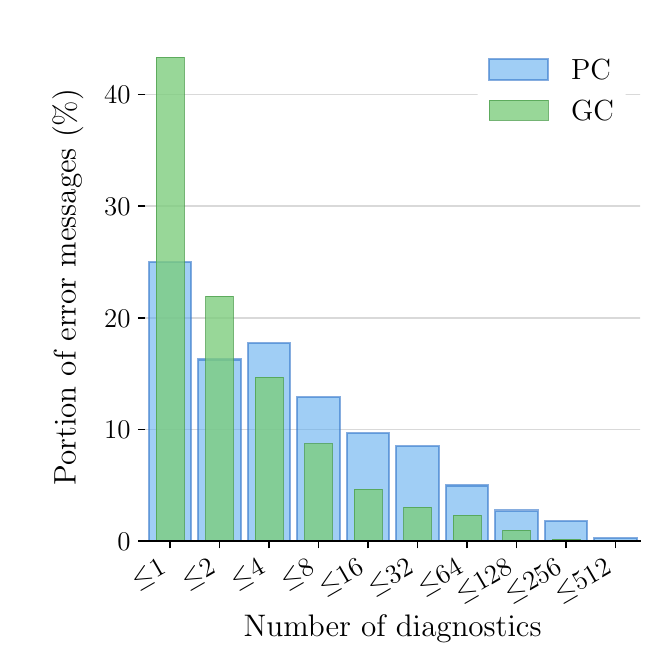}
\caption{Diagnostics per error message.}
\label{fig:error-size}
\end{subfigure}
\hfill
\begin{subfigure}{0.32\linewidth}
\centering
\includegraphics[width=\linewidth]{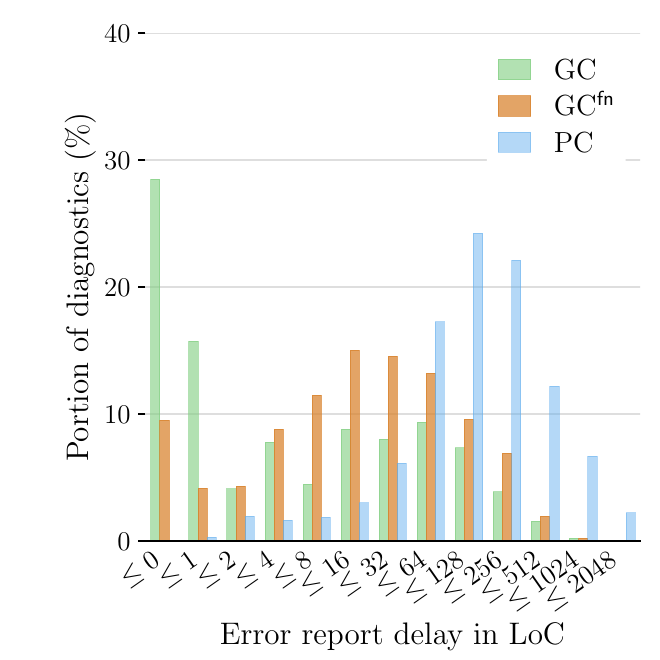}
\caption{Error report delays.}
\label{fig:error-delay}
\end{subfigure}
\hfill
\begin{subfigure}{0.32\linewidth}
\centering
\includegraphics[width=\linewidth]{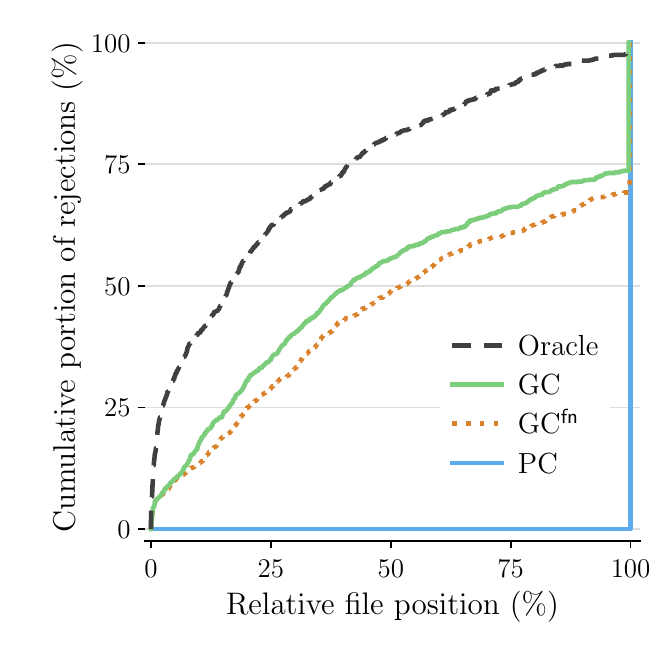}
\caption{Cumulative error detection.}
\label{fig:error-cdf}
\end{subfigure}
\vspace{-2mm}
\caption{Our analysis on the effects of generative compilation (\rimms{}). First (a), \rimms{} leads to smaller numbers of diagnostics in error messages, compared to post-generation compilation (\rfinals). Second (b), \gc{} results in smaller error report delays, compared with the function-level invariant \rimmfn{} and \pc{}. Finally (c), \gc{} detects errors early, most of the time before file completion.}
\label{fig:error}
\end{figure}

%% file: sections-new/discussion.tex
\section{Discussion and Future Work}
\label{sec:dis}

\paragraph{Modeling Error Messages}
Our generative compiler returns both a verdict $\ok$ and a diagnostic $\err$.
However, our completeness and soundness results concern only $\ok$, and do not model $\err$. In particular, we do not formalize whether $\err$ describes a genuine defect in the original, unsealed partial program rather than an artifact of sealing, but observe this behavior empirically.
This is due to two core reasons: First, we take active measures against reporting artifacts of sealing, by suppressing errors that stem from sealing-inserted code using the expected-error machinery described in \cref{sec:expected-errors}.
Second, in most cases, the placeholders that sealing introduces add no new typing obligations of their own, leaving only already-generated code that is mostly preserved verbatim.
Formalizing that $\err$ represents a genuine defect in the partial prefix, building on prior formal treatments of type-error localization~\citep{DBLP:conf/popl/Wand86,DBLP:conf/pldi/LernerFGC07,DBLP:journals/pacmpl/SeidelSCWJ17}, is future work.
A related, open question is how to design error messages for their consumer, for instance whether diagnostics should be reformulated or supplemented with context to be more actionable.
This has been studied for human programmers~\citep{DBLP:conf/sigcse/MarceauFK11} and, more recently, for AI coding agents~\citep{DBLP:journals/corr/abs-2606-01522}.
Our current rendering follows the \rustc{} diagnostic formatting, because we assume that models are most familiar with this feedback based on their training corpus.

\paragraph{When and How to Invoke Generative Compilation}
In \cref{sec:decoding}, generative compilation is tied to a fixed mechanism: invoke it whenever the previous check finishes, and on rejection, augment the prompt and restart generation.
Concurrency makes this loop practical, since validation runs alongside token generation and adds only modest overhead.
Decoupling generative compilation from this hardwired decoding loop opens several directions.
First, the invocation frequency could be tuned to trade compiler overhead against wasted generation, since less frequent checks only risk discarding a short suffix after rejection.
Second, the decision of when and how to invoke generative compilation could be delegated to the coding agent, alongside other tools it may call.
Third, our current use of generative compilation treats the LLM as a black box, leaving open whether white-box model access could yield further improvements.
Fourth, the diagnostics $\err$ produced on rejection provide valuable training signals, suggesting the possibility of using them during training rather than only at inference time.
Finally, we highlight that generative compilation is a general concept applicable beyond LLMs, and works with any code generation method that is steerable by natural language prompts, and produces intermediate code outputs.

\paragraph{Automating Sealor Construction}
The sealors in this paper, for both \FR (\cref{sec:method}) and Rust (\cref{sec:rust}), are hand-written.
For each construct, we manually derive a sealing rule and argue its completeness and soundness.
An open direction is to synthesize such rules automatically from a language's specification or reference implementation, rather than by hand.
This is a challenging task, since rules must be complete while achieving strong soundness, so synthesis would need to search under these constraints rather than under well-formedness alone.
Automating this construction could help generative compilation scale even further: to languages beyond Rust, to languages as they evolve, and even to new languages as they are designed and implemented.
The latter two cases may benefit most, since LLMs have little or no training data for such code, making direct and on-the-fly compiler feedback especially valuable.

%% file: sections-new/related.tex
\section{Related Work}

\paragraph{Compiler Feedback}
Compilers help programming in two distinct ways: they establish guarantees, such as memory safety~\citep{jim2002cyclone,matsakis2014rust} or information-flow security~\citep{myers1999jflow}, and they provide feedback that guides revision when those guarantees are violated.
Such feedback, especially well-designed error messages, has long helped human programmers from novices to professionals~\citep{DBLP:conf/popl/Wand86,DBLP:conf/pldi/LernerFGC07,DBLP:journals/pacmpl/SeidelSCWJ17,DBLP:conf/sigcse/MarceauFK11}.
Earlier research has brought this feedback into AI-assisted code generation, using it during training through supervised fine-tuning~\citep{wang-etal-2022-compilable} or reinforcement learning~\citep{dou-etal-2024-stepcoder}, and at inference time to iteratively refine generated programs~\citep{bi-etal-2024-iterative}.
Compiler feedback is now common in coding agents, which run compilers through the command line or invoke them as tools, obtain diagnostics, and accordingly revise code~\citep{DBLP:conf/icse/DeligiannisLMPR25}.
These approaches all use the compiler after a complete program has been generated.
Generative compilation instead invokes the compiler during generation, on still-partial programs, making it an earlier in-the-loop complement to post-generation feedback.

\paragraph{Constrained Decoding}
Like generative compilation, constrained decoding checks programs during autoregressive generation~\citep{synchromesh,DBLP:conf/nips/AgrawalKGLR23}.
The key difference is how the check is used.
Constrained decoding intervenes inside sampling, restricting next-token choices.
Early approaches do so locally, token by token, which can harm the final program \citep{dominos-DBLP:journals/corr/abs-2403-06988,DBLP:conf/iclr/UgareGS0M25,park2024gad}; later work makes sampling aware of global constraints, but still operates at the sampling level~\citep{loula2025syntactic,DBLP:journals/corr/abs-2506-05754,lipkin2025awrs,DBLP:journals/corr/abs-2511-22277}.
Constrained decoding for constraints beyond syntax typically requires significant reimplementation~\citep{ts,chopchop},
and a constraining system that only accepts a subset of the language (i.e., one that is incomplete) can degrade performance~\citep{biagiola2026alignmentproblemconstrainedcode}.
For this reason no constrained decoding techniques exist for complex properties such as the Rust borrow and lifetime system that we could compare generative compilation to.

\paragraph{Syntax- and Type-Guided Program Construction}
Syntax and types have long guided the program search, especially in synthesis from formal specifications~\citep{completion-DBLP:conf/pldi/GveroKKP13,DBLP:conf/fmcad/AlurBJMRSSSTU13,DBLP:conf/pldi/OseraZ15,prog-synth-refinement-types-DBLP:conf/pldi/PolikarpovaKS16,rust-synth-DBLP:journals/pacmpl/FialaI0PS23}. Our work also reasons about syntax and types, but targets LLM-based code generation.

Designed for live programming environments like IDEs, typed holes~\citep{DBLP:conf/popl/OmarVHAH17,DBLP:journals/pacmpl/OmarVCH19} are close in spirit to our placeholders: $\eps$ for \FR (\cref{sec:fr-sealor}), and $\panicterm$ and $\missingterm$ for Rust (\cref{sec:rust-distinctions}).
The purpose of typed holes is also to make incomplete programs well-typed, permitting the extraction of static type contexts~\citep{DBLP:journals/pacmpl/BlinnLKO24}.
The difference is where the hole lives.
In contrast to our placeholders, typed holes require an extension of the language: the hole in the program is made a first-class citizen of the language, and the type system is extended to reason about it.
Our placeholders instead leverage existing elements of the language and are already handled by the language's type system.

%% file: sections-new/conclusion.tex
\section{Conclusion}

We introduced generative compilation, the first approach to bring compiler feedback into the intermediate steps of LLM-based code generation.
The central idea is to seal a partial program into a complete program that a standard compiler can check, thereby turning an off-the-shelf compiler into a prefix checker with diagnostics.
This preserves the practical advantages of compiler feedback: black-box model access, use of the real compiler, and rich error messages.
At the same time, we avoid the delayed feedback of post-generation checking and the implementation burden of constrained decoding.

We formalized this idea through sealors, proving that a sealor induces a generative compiler with the same completeness and soundness guarantees, and instantiated the framework on Featherweight Rust with a syntax-guided sealor satisfying this property, fully mechanized in Lean.
We then carried the same methodology to build the first prefix checker for real Rust, where the sealor handles Rust-specific expression and statement structure, control flow, placeholders, and future-dependent compiler errors while delegating type, borrow, and lifetime reasoning to \rustc{}.

Our evaluation on Rust code generation tasks shows that generative compilation detects many errors close to their source and long before file completion, including type, borrow-check, and lifetime errors.
Integrated into code generation, this early feedback reduces non-compiling outputs and improves functional correctness compared with standard post-hoc compiler feedback.

%% file: sections-new/acks.tex
\section*{Acknowledgments}

We thank Ralf Jung for his invaluable insights into the complexity of formalizing Rust and his encouragement to pursue the project.
We also thank Marius Arvinte, Cory Cornelius, and the other members of Intel's Security and Privacy Research Lab for helpful technical discussions.
Dawn Song and Jingxuan He were supported in part by the Defense Advanced Research Projects Agency (DARPA) Translating All C To Rust (TRACTOR) program under Agreement No. HR00112590134.
Niels Mündler-Sasahara and Martin Vechev are supported by the grant SAFEAI (Certified Safe, Fair and Robust Artificial Intelligence). The work has received funding from the Swiss State Secretariat for Education, Research and Innovation (SERI), contract no.  MB22.00088. Any opinions, findings, and conclusions expressed in this material are those of the authors, and do not necessarily reflect the views of the sponsoring entities.
This research was partially funded by the Ministry of Education and Science of Bulgaria (support for INSAIT, part of the Bulgarian National Roadmap for Research Infrastructure).

%% file: sections-new/appendix.tex
\appendix

\clearpage

\input{sections-new/appendix/app_experimental_details}
\input{sections-new/appendix/app_prompts}

\clearpage

\input{sections-new/appendix/app_prompt_figures}

%% file: sections-new/appendix/app_experimental_details.tex
\section{Experimental Details}
\label{app:exp-details}

\subsection{Construction of \dupgrade{}}
\label{app:exp-upgrade}

For the \emph{\dupgrade{}} dataset, we construct a benchmark that requires LLMs to use third-party libraries with API changes after their knowledge cutoff date. The benchmark is constructed as follows.
We first fetch the 100 most downloaded Rust crates that have had minor or major version bumps in the previous 6 months at the time of writing. We then task \textsf{Codex} with \mgpt{} to filter out the packages that had changes to their public-facing API during this version bump. We then task the model to construct realistic and sufficiently complex use cases of these libraries that require the changed API. This step results in 30 single-file code projects that we manually assess to be suitable for LLM evaluation. We then remove the implementation of the function that uses this changed API, replacing it with \texttt{unimplemented!()}. The model is then tasked to regenerate the entire file, including the corrected function bodies.
The tasks require implementing an average of 5.9 function bodies (maximum 15) and solutions averaging 3,142 characters (maximum 4,952), which are verified by an average of 3.23 test cases.

\subsection{Construction of \dtranslation{}}
\label{app:exp-translation}

The dataset is based on the CRUST-Bench benchmark~\citep{khatry2025crustbench}, consisting of 100 C-to-Rust translation tasks. We run \mgpt{} and \mopus{} on these tasks and observe that they are capable of solving around 80 of these tasks zero-shot, without any compiler feedback. We focus on the remaining 20 tasks. We split each task, which concerns translating an entire library, into individual tasks for each file in the library. The final evaluation is performed by embedding the generated file into a repository of golden solutions. Since interfaces and structures are fixed before compilation, model generations for each file should be independent of the surrounding files. We list the prompts used for generation in \cref{app:prompts}.

\subsection{Experimental Setup Details}
\label{app:exp-setup}

\paragraph{Rust Implementation}
Our implementation is based on Rust 1.95.0 \citep{rustc}. We apply a minimal patch to the Rust compiler to access type inference results to reveal defined identifiers and available functions and methods as well as the respective expected number of parameters. Our changes are supplied with the implementation as a code patch.

\paragraph{Hyperparameters}
The \mqwen{} models, \mglm{} and \mkimi{} are accessed through OpenRouter. \mgemini{} is accessed directly through the Google Vertex API. \mgptcodex{} is accessed through the OpenAI responses API. \mopus{} is accessed directly through the Anthropic API. For Opus, since the API endpoint does not permit setting a temperature, we omit it in requests. For all other models we set it to $0.6$.
The maximum token output for the \dupgrade{} task is 20,000 tokens; for the \dtranslation{} task, it is 30,000 tokens.
We retry model inference errors up to 10 times with a timeout of 600 seconds. Beyond this point we consider inference for a task as failed. This occurs rarely and affects $0.8\%$ of files in \dtranslation{} and $0.2\%$ of files in \dupgrade{}. We further observe that such failures occur for the same instances, which have extremely long input sizes.
The verifier and compilation are run on a server with 2 AMD EPYC 9655 96-core processors, totaling 192 physical and 384 logical cores.

\subsection{Detailed Analysis of Detected Error Kinds}
\label{app:error-kinds}

We categorize the error codes reported during mid-file restarts in \cref{sec:eval:analysis} into syntactic errors (Syntax), type violations (Type), borrow-check and lifetime violations (Borrow \& Lifetime), and others (Other), denoting errors without an associated error code as \ttt{*}. \cref{fig:error-kinds} shows the resulting histogram per dataset.

\begin{figure}
        \centering
        \includegraphics[width=\linewidth]{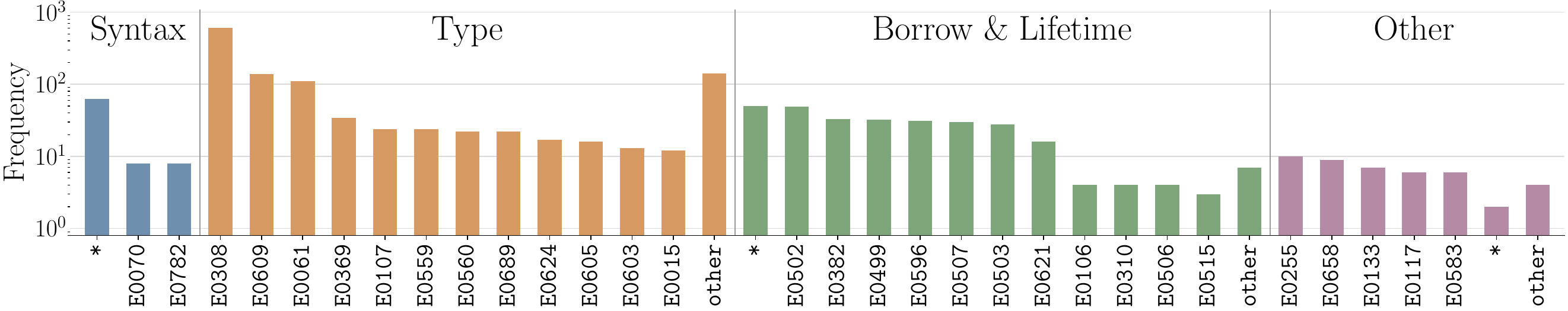}
        \caption{The error kinds detected during rollbacks span syntax errors, type errors, borrow-check and lifetime violations, and others.}
        \label{fig:error-kinds}
\end{figure}

The most frequent error on both datasets is a mismatch between an expression and its expected type (E0308), accounting for $38.2\%$ of reports on \dtranslation{} and $37.4\%$ on \dupgrade{}. On \dtranslation{}, it is followed by accesses to unknown fields ($9.5\%$, E0609), reflecting the difficulty of faithfully porting C data layouts to the provided Rust skeletons. On \dupgrade{}, the subsequent errors are calls with a wrong number of arguments ($18.1\%$, E0061), uses of a wrong number of generic arguments ($7.2\%$, E0107), initializations with nonexistent fields ($7.2\%$, E0559), and accesses to unknown fields ($5.4\%$, E0609), errors that arise when a library's public API evolves, matching the dataset design.

\subsection{Restart Budget and Token Limit}
\label{app:budget-ablation}

\begin{wrapfigure}{r}{0.45\linewidth}
        \vspace{-1em}
        \centering
        \includegraphics[width=\linewidth]{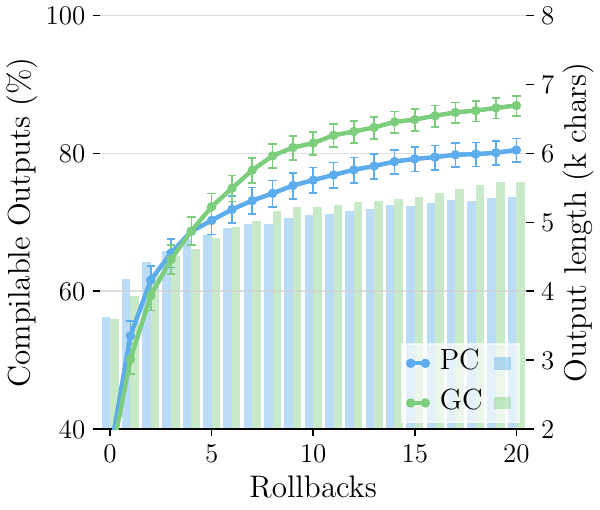}
        \caption{Fraction of compilable outputs by restart budget. Both methods flatten well before $20$ restarts. The output length (bars) stays far below the token limit.}
        \label{fig:rollback-budget}
        \vspace{-1em}
\end{wrapfigure}

To confirm that the restart budgets of $15$ (\dupgrade{}) and $20$ (\dtranslation{}) restarts and the token limits of $20,000$ and $30,000$ tokens do not constrain the results in \cref{sec:eval:downstream}, we pool the generations of all evaluated models on both datasets and measure the fraction of compilable outputs attainable within a given number of restarts (\cref{fig:rollback-budget}).
Both methods improve steeply over the first few restarts and then flatten: increasing the budget from $15$ to $20$ restarts raises the fraction of compilable outputs by at most $2.1$ percentage points for either method ($84.9\%$ to $87.0\%$ for \gc{}, $79.2\%$ to $80.5\%$ for \pc{}).
Per restart, \pc{} initially gains more, as each of its restarts has all compiler errors in context at once; from four restarts onward, however, \gc{} surpasses it. At seven restarts, \gc{} resolves $4.4$ percentage points more generations within the same budget.
Moreover, the generated outputs stay far below our token limits of $20,000$ and $30,000$ at every restart count.
We conclude that neither the restart budgets of $15$ and $20$ nor the token limits constrain the compared methods.

%
%

%% file: sections-new/appendix/app_prompts.tex
\section{End-to-End Example}
\label{app:end-to-end}

The excerpt in \cref{fig:prompt-e2e-immediate} is from the \dupgrade{} task for
\texttt{rustls-webpki}, obtained with \mopus{}.
The model is asked to update a small certificate-validation crate to a new
version of the library while filling the missing implementation in
\texttt{src/lib.rs}.
During generation, the model constructs a certificate wrapper as a temporary
and stores a trust anchor that borrows from it.
Our setup detects that the temporary would be dropped while the borrow remains
live and returns the \texttt{E0716} diagnostic immediately.
After receiving this feedback, the model binds the certificate wrapper to a
local variable, extending its lifetime; and is otherwise correct as well. The early feedback was provided in line 4 of a 32 line function, thus preventing potentially wasteful generation of a significant number of following code.

\section{Inference Prompts}
\label{app:prompts}

The inference code sends one user message containing the dataset prompt. The required prefix for grading is either prefilled as the beginning of the assistant answer (open-weight models) or appended to the user message as an instruction to start with that exact text (black-box models).

\subsection{Generic Inference Wrappers}
\label{app:prompts-generic}

Since we require a specific prefix for grading of results but cannot prefill assistant answers for black-box models, we append the short instruction shown in \Cref{fig:prompt-empty-initial-answer} that specifies the required answer prefix; it appears instantiated at the end of the dataset prompts in \Cref{fig:prompt-apiupgrade-user,fig:prompt-crustbench-user}.
\Cref{fig:prompt-immediate-rollback} is the user message appended by \rimm{} after a rejected partial output; the rejected output itself is added to the prompt history as an assistant message, and generation restarts from the prefilled prefix.
\Cref{fig:prompt-final-edit-feedback} is the corresponding user message used by \rfinal{} after a rejected complete output. Here, the model responds with complete replacement files, patches in a unified diff-like format, or a combination of both, which are applied to the previous output and re-checked. If a returned patch fails to apply, a follow-up user message reports the patch application error and repeats the format instructions.

\subsection{\dupgrade{}}
\label{app:prompts-apiupgrade}

\Cref{fig:prompt-apiupgrade-user} is the generation prompt used for
\dupgrade{} inference. It is sent as the user message for each crate
instance, with the concrete file list and incomplete \texttt{src/lib.rs}
substituted into the placeholders.

\subsection{\dtranslation{}}
\label{app:prompts-crustbench}

\Cref{fig:prompt-crustbench-user} is the user prompt used for
\dtranslation{} inference. It is instantiated with the C sources and the
incomplete Rust interface of the target file.

%% file: sections-new/appendix/app_prompt_figures.tex
\begin{figure}[p]
\centering
\begin{tcolorbox}[colback=white,colframe=black,boxrule=0.4pt,arc=0pt,left=1mm,right=1mm,top=1mm,bottom=1mm]
\noindent\hfill\textsf{\scriptsize Prompt}\par\vspace{-1em}
\begin{lstlisting}[language={},basicstyle=\ttfamily\tiny,breaklines=true,numbers=none,mathescape=false,escapechar=~]
You are given a Rust crate.
Implement only `src/lib.rs` based on the crate's documentation comments, inline comments, and tests.
[...]
/// Validates a certificate chain for server use.
pub fn validate_server_usage(
    end_entity_der: &[u8], trust_anchor_der: &[u8], intermediate_ders: &[Vec<u8>]
) -> Result<CheckReport, CheckError> {
    unimplemented!();
}
\end{lstlisting}
\par\nointerlineskip\vspace{1mm}\noindent\hspace*{-1mm}\rule{\dimexpr\linewidth+2mm\relax}{0.4pt}\par\vspace{1mm}

\noindent\hfill\textsf{\scriptsize Assistant}\par\vspace{-1em}
\begin{lstlisting}[language={},basicstyle=\ttfamily\tiny,breaklines=true,numbers=none,mathescape=false,escapechar=~]
[...]
pub fn validate_server_usage(
    end_entity_der: &[u8], trust_anchor_der: &[u8], intermediate_ders: &[Vec<u8>]
) -> Result<CheckReport, CheckError> {
    let intermediates_seen = intermediate_ders.len();

    let ee_der = CertificateDer::from(end_entity_der);
    let end_entity = EndEntityCert::try_from(&ee_der).map_err(CheckError::Certificate)?;

    let trust_anchor = webpki::anchor_from_trusted_cert(&CertificateDer::from(trust_anchor_der))
        .map_err(CheckError::Certificate)?;
    let trust_anchors: Vec<TrustAnchor> = vec![trust_anchor];

    let intermediate_certs: Vec
\end{lstlisting}
\par\nointerlineskip\vspace{1mm}\noindent\hspace*{-1mm}\rule{\dimexpr\linewidth+2mm\relax}{0.4pt}\par\vspace{1mm}

\noindent\hfill\textsf{\scriptsize Compiler Feedback}\par\vspace{-1em}
\begin{lstlisting}[language={},basicstyle=\ttfamily\tiny,breaklines=true,numbers=none,mathescape=false,escapechar=~]
Your previous partial output was rejected by the compiler.
Compiler error details:
error[E0716]: temporary value dropped while borrowed
 --> src/lib.rs:155:43
    |
155 |         webpki::anchor_from_trusted_cert(&CertificateDer::from(trust_anchor_der))
    |                                           ^^^^^^^^^^^^^^^^^^^^^^^^^^^^^^^^^^^^^^ creates a temporary value which
    |                                                                                     is freed while still in use
156 |         .map_err(CheckError::Certificate)?;
    |                                          - temporary value is freed at the end of this statement
157 |     let trust_anchors: Vec<TrustAnchor> = vec![trust_anchor];
    |                                                ------------ borrow later used here
  = help: consider using a `let` binding to create a longer lived value

Restart the generation.
\end{lstlisting}
\par\nointerlineskip\vspace{1mm}\noindent\hspace*{-1mm}\rule{\dimexpr\linewidth+2mm\relax}{0.4pt}\par\vspace{1mm}

\noindent\hfill\textsf{\scriptsize Assistant}\par\vspace{-1em}
\begin{lstlisting}[language={},basicstyle=\ttfamily\tiny,breaklines=true,numbers=none,mathescape=false,escapechar=~]
[...]
pub fn validate_server_usage(
    end_entity_der: &[u8], trust_anchor_der: &[u8], intermediate_ders: &[Vec<u8>]
) -> Result<CheckReport, CheckError> {
    let intermediates_seen = intermediate_ders.len();

    let ee_der = CertificateDer::from(end_entity_der);
    let end_entity = EndEntityCert::try_from(&ee_der).map_err(CheckError::Certificate)?;

    let trust_anchor_cert = CertificateDer::from(trust_anchor_der);
    let trust_anchor = webpki::anchor_from_trusted_cert(
        &trust_anchor_cert
    ).map_err(CheckError::Certificate)?;
    let trust_anchors: Vec<TrustAnchor> = vec![trust_anchor];

    let intermediate_certs: Vec<CertificateDer> = intermediate_ders
        .iter()
        .map(|der| CertificateDer::from(der.as_slice()))
        .collect();

    let now = unix_time_now()?;
    let policy = UsagePolicy::server();
    [...]
\end{lstlisting}

%
%
%
\end{tcolorbox}
\caption{End-to-end excerpt from a \gc{} run by \mopus{} on the \texttt{rustls-webpki} API-update task. The partial output borrows from a temporary \texttt{CertificateDer} and is rejected with \texttt{E0716}. After receiving the diagnostic, the model introduces the longer-lived \texttt{trust\_anchor\_cert} binding and generates semantically valid code.}
\label{fig:prompt-e2e-immediate}
\end{figure}

\begin{figure}[t]
\centering
\begin{tcolorbox}[colback=white,colframe=black,boxrule=0.4pt,arc=0pt,left=1mm,right=1mm,top=1mm,bottom=1mm]
\begin{lstlisting}[language={},basicstyle=\ttfamily\scriptsize,breaklines=true,numbers=none,mathescape=false,escapechar=~]
Start your response with this exact prefix:
{prefilled_answer}
\end{lstlisting}
\end{tcolorbox}
\caption{Instruction appended to the user message when inference is run with an empty initial assistant answer. It replaces the usual prefilled assistant prefix by requiring the model to emit that prefix itself.}
\label{fig:prompt-empty-initial-answer}
\end{figure}

\begin{figure}[t]
\centering
\begin{tcolorbox}[colback=white,colframe=black,boxrule=0.4pt,arc=0pt,left=1mm,right=1mm,top=1mm,bottom=1mm]
\begin{lstlisting}[language={},basicstyle=\ttfamily\scriptsize,breaklines=true,numbers=none,mathescape=false,escapechar=~]
Your previous partial output was rejected by the compiler.
Compiler error details:
{rendered_compiler_details}

Restart the generation.
\end{lstlisting}
\end{tcolorbox}
\caption{Feedback message used by \rimm{}. The prompt history first receives the rejected partial assistant output \texttt{\{prefix\}\{code\}}, followed by this user message with rendered compiler diagnostics.}
\label{fig:prompt-immediate-rollback}
\end{figure}

\begin{figure}[t]
\centering
\begin{tcolorbox}[colback=white,colframe=black,boxrule=0.4pt,arc=0pt,left=1mm,right=1mm,top=1mm,bottom=1mm]
\begin{lstlisting}[language={},basicstyle=\ttfamily\scriptsize,breaklines=true,numbers=none,mathescape=false,escapechar=~]
Your previous output was rejected by the compiler.
Compiler error details:
{rendered_compiler_details}

Return only the changes needed to fix the current project output. You may use any combination of these two formats in the same answer:

1. Regenerate complete Rust files that should replace files from the current project state. Use this format for each regenerated file:
{path/to/file.rs}
```rust
// complete replacement Rust file
```

2. Return one or more patches that edit the previous output. Use this exact patch format:
*** Begin Patch
*** Update File: path/to/file.rs
@@
-old line
+new line
*** End Patch

Omitted files stay fixed exactly as currently generated.
\end{lstlisting}
\end{tcolorbox}
\caption{Feedback message used by \rfinal{}. The prompt history first receives the rejected complete assistant output, followed by this user message with the rendered compiler diagnostics. The model may respond with full replacement files, patches, or a combination; the result is applied and re-checked.}
\label{fig:prompt-final-edit-feedback}
\end{figure}

\begin{figure}[t]
\centering
\begin{tcolorbox}[colback=white,colframe=black,boxrule=0.4pt,arc=0pt,left=1mm,right=1mm,top=1mm,bottom=1mm]
\begin{lstlisting}[language={},basicstyle=\ttfamily\scriptsize,breaklines=true,numbers=none,mathescape=false,escapechar=~]
You are given a Rust crate.
Implement only `{interface_path}` based on the crate's documentation comments, inline comments, and tests.

Please follow these instructions:
  - Implement all `unimplemented!()` sections in `{interface_path}`.
  - Keep function signatures unchanged.
  - Produce compiling Rust code.
  - Do not use `unsafe`.
  - Return only the completed `{interface_path}` file contents.

Return only:
  {interface_path}
  ```rust
  // completed Rust code
  ```

These are the files:

{path_1}
```{language_1}
{code_1}
```

...

{path_n}
```{language_n}
{code_n}
```

{interface_path} (incomplete)
```rust
{interface_code}
```

Start your response with this exact prefix:
{interface_path}
```rust
\end{lstlisting}
\end{tcolorbox}
\caption{User prompt for \dupgrade{}. The model receives the crate files, including tests and comments, and is asked to implement only the incomplete interface file; \texttt{\{interface\_path\}} is instantiated with \texttt{src/lib.rs}. The trailing prefix instruction replaces the prefilled assistant answer for black-box models (\cref{app:prompts-generic}).}
\label{fig:prompt-apiupgrade-user}
\end{figure}

\begin{figure}[t]
\centering
\begin{tcolorbox}[colback=white,colframe=black,boxrule=0.4pt,arc=0pt,left=1mm,right=1mm,top=1mm,bottom=1mm]
\begin{lstlisting}[language={},basicstyle=\ttfamily\scriptsize,breaklines=true,numbers=none,mathescape=false,escapechar=~]
You are an expert at converting C To Rust.
You will be provided with C source files in the format:

{filename.c / filename.h}
```c
// Input C code
```

You will also be given the Rust Interface definition corresponding to the C file that you must implement by filling in the unimplemented!() parts based on the C code.
The Rust Interface file will be in the format:

{filename.rs}
```rust
// Rust Interface code
```

I need you to transpile the provided code files from C to Rust.
Please follow the instructions provided below:
  - Each C file I provide MUST be transpiled into a corresponding Rust file.
  - You MUST use the same name as the Rust interface file name, when generating the transpiled Rust Code
  - You MUST always implement the Rust Interface files using the C code as reference.
  - You MUST ensure that you implement ALL unimplemented!() parts strictly.
  - You MUST NOT change the function signatures of the Rust code.
  - Each transpiled Rust file MUST compile.
  - The transpiled Rust code MUST be observationally equivalent to the C code.
  - The transpiled Rust code MUST NOT contain Foreign Function Interface calls, such as the libc library.
  - The transpiled Rust code MUST NOT contain unsafe blocks.
  - You must ensure that you import the required files that are referenced in each Rust file.
  - You may derive traits for the data types that are defined, to elicit required behavior during implementation. For instance you can add #[derive(Clone)] for structs when the Clone trait is required for implementation.
  - All imports in the rust file must be in the following format -
    ```rust
      use crate::file_name::module;
    ```

Please think step-by-step and return your final solution for the transpiled file in the following format:

  {filename.rs}
  ```rust
  // Generated Rust Code
  ```

        These are the files:
{c_sources}

{interface_path} (incomplete)
```rust
{interface_code}
```

Start your response with this exact prefix:
{interface_path}
```rust
\end{lstlisting}
\end{tcolorbox}
\caption{User prompt for \dtranslation{}. It gives the C sources and the incomplete Rust interface of the target file. The trailing prefix instruction replaces the prefilled assistant answer for black-box models (\cref{app:prompts-generic}).}
\label{fig:prompt-crustbench-user}
\end{figure}